\begin{document}
\newtheorem{thm}{Theorem}
\newtheorem{lem}[thm]{Lemma}
\newtheorem{prop}[thm]{Proposition}
\newtheorem{cor}[thm]{Corollary}
\newtheorem{assum}[thm]{Assumption}
\newtheorem{rem}[thm]{Remark}
\newtheorem{notation}[thm]{Notation}
\newtheorem{defn}[thm]{Definition}
\newtheorem{clm}[thm]{Claim}
\newcommand{\unit}{\mathbb I}
\newcommand{\ali}[1]{{\mathfrak A}_{[ #1 ,\infty)}}
\newcommand{\alm}[1]{{\mathfrak A}_{(-\infty, #1 ]}}
\newcommand{\nn}[1]{\lV #1 \rV}
\newcommand{\br}{{\mathbb R}}
\newcommand{\dm}{{\rm dom}\mu}
\newcommand{\lb}{l_{\bb}(n,n_0,k_R,k_L,\lal,\bbD,\bbG,Y)}
\newcommand{\Ad}{\mathop{\mathrm{Ad}}\nolimits}
\newcommand{\Proj}{\mathop{\mathrm{Proj}}\nolimits}
\newcommand{\sgn}{\mathop{\mathrm{sgn}}\nolimits}
\newcommand{\id}{\mathop{\mathrm{id}}\nolimits}
\newcommand{\RRe}{\mathop{\mathrm{Re}}\nolimits}
\newcommand{\RIm}{\mathop{\mathrm{Im}}\nolimits}
\newcommand{\Wo}{\mathop{\mathrm{Wo}}\nolimits}
\newcommand{\Prim}{\mathop{\mathrm{Prim}_1}\nolimits}
\newcommand{\Primz}{\mathop{\mathrm{Prim}}\nolimits}
\newcommand{\ClassA}{\mathop{\mathrm{ClassA}}\nolimits}
\newcommand{\Class}{\mathop{\mathrm{Class}}\nolimits}
\def\qed{{\unskip\nobreak\hfil\penalty50
\hskip2em\hbox{}\nobreak\hfil$\square$
\parfillskip=0pt \finalhyphendemerits=0\par}\medskip}
\def\proof{\trivlist \item[\hskip \labelsep{\bf Proof.\ }]}
\def\endproof{\null\hfill\qed\endtrivlist\noindent}
\def\proofof[#1]{\trivlist \item[\hskip \labelsep{\bf Proof of #1.\ }]}
\def\endproofof{\null\hfill\qed\endtrivlist\noindent}
\newcommand{\caA}{{\mathcal A}}
\newcommand{\caB}{{\mathcal B}}
\newcommand{\caC}{{\mathcal C}}
\newcommand{\caD}{{\mathcal D}}
\newcommand{\caE}{{\mathcal E}}
\newcommand{\caF}{{\mathcal F}}
\newcommand{\caG}{{\mathcal G}}
\newcommand{\caH}{{\mathcal H}}
\newcommand{\caI}{{\mathcal I}}
\newcommand{\caJ}{{\mathcal J}}
\newcommand{\caK}{{\mathcal K}}
\newcommand{\caL}{{\mathcal L}}
\newcommand{\caM}{{\mathcal M}}
\newcommand{\caN}{{\mathcal N}}
\newcommand{\caO}{{\mathcal O}}
\newcommand{\caP}{{\mathcal P}}
\newcommand{\caQ}{{\mathcal Q}}
\newcommand{\caR}{{\mathcal R}}
\newcommand{\caS}{{\mathcal S}}
\newcommand{\caT}{{\mathcal T}}
\newcommand{\caU}{{\mathcal U}}
\newcommand{\caV}{{\mathcal V}}
\newcommand{\caW}{{\mathcal W}}
\newcommand{\caX}{{\mathcal X}}
\newcommand{\caY}{{\mathcal Y}}
\newcommand{\caZ}{{\mathcal Z}}
\newcommand{\bbA}{{\mathbb A}}
\newcommand{\bbB}{{\mathbb B}}
\newcommand{\bbC}{{\mathbb C}}
\newcommand{\bbD}{{\mathbb D}}
\newcommand{\bbE}{{\mathbb E}}
\newcommand{\bbF}{{\mathbb F}}
\newcommand{\bbG}{{\mathbb G}}
\newcommand{\bbH}{{\mathbb H}}
\newcommand{\bbI}{{\mathbb I}}
\newcommand{\bbJ}{{\mathbb J}}
\newcommand{\bbK}{{\mathbb K}}
\newcommand{\bbL}{{\mathbb L}}
\newcommand{\bbM}{{\mathbb M}}
\newcommand{\bbN}{{\mathbb N}}
\newcommand{\bbO}{{\mathbb O}}
\newcommand{\bbP}{{\mathbb P}}
\newcommand{\bbQ}{{\mathbb Q}}
\newcommand{\bbR}{{\mathbb R}}
\newcommand{\bbS}{{\mathbb S}}
\newcommand{\bbT}{{\mathbb T}}
\newcommand{\bbU}{{\mathbb U}}
\newcommand{\bbV}{{\mathbb V}}
\newcommand{\bbW}{{\mathbb W}}
\newcommand{\bbX}{{\mathbb X}}
\newcommand{\bbY}{{\mathbb Y}}
\newcommand{\bbZ}{{\mathbb Z}}
\newcommand{\caSP}{{\mathcal SP}}
\newcommand{\str}{^*}
\newcommand{\lv}{\left \vert}
\newcommand{\rv}{\right \vert}
\newcommand{\lV}{\left \Vert}
\newcommand{\rV}{\right \Vert}
\newcommand{\la}{\left \langle}
\newcommand{\ra}{\right \rangle}
\newcommand{\ltm}{\left \{}
\newcommand{\rtm}{\right \}}
\newcommand{\lcm}{\left [}
\newcommand{\rcm}{\right ]}
\newcommand{\ket}[1]{\lv #1 \ra}
\newcommand{\bra}[1]{\la #1 \rv}
\newcommand{\lmk}{\left (}
\newcommand{\rmk}{\right )}
\newcommand{\al}{{\mathcal A}}
\newcommand{\md}{M_d({\mathbb C})}
\newcommand{\Tr}{\mathop{\mathrm{Tr}}\nolimits}
\newcommand{\Ran}{\mathop{\mathrm{Ran}}\nolimits}
\newcommand{\Ker}{\mathop{\mathrm{Ker}}\nolimits}
\newcommand{\spn}{\mathop{\mathrm{span}}\nolimits}
\newcommand{\Mat}{\mathop{\mathrm{M}}\nolimits}
\newcommand{\UT}{\mathop{\mathrm{UT}}\nolimits}
\newcommand{\DT}{\mathop{\mathrm{DT}}\nolimits}
\newcommand{\GL}{\mathop{\mathrm{GL}}\nolimits}
\newcommand{\spa}{\mathop{\mathrm{span}}\nolimits}
\newcommand{\supp}{\mathop{\mathrm{supp}}\nolimits}
\newcommand{\rank}{\mathop{\mathrm{rank}}\nolimits}
\newcommand{\idd}{\mathop{\mathrm{id}}\nolimits}
\newcommand{\ran}{\mathop{\mathrm{Ran}}\nolimits}
\newcommand{\dr}{ \mathop{\mathrm{d}_{{\mathbb R}^k}}\nolimits} 
\newcommand{\dc}{ \mathop{\mathrm{d}_{\cc}}\nolimits} \newcommand{\drr}{ \mathop{\mathrm{d}_{\rr}}\nolimits} 
\newcommand{\zin}{\mathbb{Z}}
\newcommand{\rr}{\mathbb{R}}
\newcommand{\cc}{\mathbb{C}}
\newcommand{\ww}{\mathbb{W}}
\newcommand{\nan}{\mathbb{N}}\newcommand{\bb}{\mathbb{B}}
\newcommand{\aaa}{\mathbb{A}}\newcommand{\ee}{\mathbb{E}}
\newcommand{\pp}{\mathbb{P}}
\newcommand{\wks}{\mathop{\mathrm{wk^*-}}\nolimits}
\newcommand{\he}{\hat {\mathbb E}}
\newcommand{\ikn}{{\caI}_{k,n}}
\newcommand{\mk}{{\Mat_k}}
\newcommand{\mnz}{\Mat_{n_0}}
\newcommand{\mn}{\Mat_{n}}
\newcommand{\mkk}{\Mat_{k_R+k_L+1}}
\newcommand{\mnzk}{\mnz\otimes \mkk}
\newcommand{\hbb}{H^{k,\bb}_{m,p,q}}
\newcommand{\gb}[1]{\Gamma^{(R)}_{#1,\bb}}
\newcommand{\cgv}[1]{\caG_{#1,\vv}}
\newcommand{\gv}[1]{\Gamma^{(R)}_{#1,\vv}}
\newcommand{\gvt}[1]{\Gamma^{(R)}_{#1,\vv(t)}}
\newcommand{\gbt}[1]{\Gamma^{(R)}_{#1,\bb(t)}}
\newcommand{\cgb}[1]{\caG_{#1,\bb}}
\newcommand{\cgbt}[1]{\caG_{#1,\bb(t)}}
\newcommand{\gvp}[1]{G_{#1,\vv}}
\newcommand{\gbp}[1]{G_{#1,\bb}}
\newcommand{\gbpt}[1]{G_{#1,\bb(t)}}
\newcommand{\Pbm}[1]{\Phi_{#1,\bb}}
\newcommand{\Pvm}[1]{\Phi_{#1,\bb}}
\newcommand{\mb}{m_{\bb}}
\newcommand{\E}[1]{\widehat{\mathbb{E}}^{(#1)}}
\newcommand{\lal}{{\boldsymbol\lambda}}
\newcommand{\mm}{{\boldsymbol\mu}}
\newcommand{\oo}{{\boldsymbol\omega}}
\newcommand{\vv}{{\boldsymbol v}}
\newcommand{\bbm}{{\boldsymbol m}}
\newcommand{\kl}[1]{{\mathcal K}_{#1}}
\newcommand{\wb}[1]{\widehat{B_{\mu^{(#1)}}}}
\newcommand{\ws}[1]{\widehat{\psi_{\mu^{(#1)}}}}
\newcommand{\wsn}[1]{\widehat{\psi_{\nu^{(#1)}}}}
\newcommand{\wv}[1]{\widehat{v_{\mu^{(#1)}}}}
\newcommand{\wbn}[1]{\widehat{B_{\nu^{(#1)}}}}
\newcommand{\wo}[1]{\widehat{\omega_{\mu^{(#1)}}}}
\newcommand{\dist}{\dc}
\newcommand{\hpu}{\hat P^{(n_0,k_R,k_L)}_R}
\newcommand{\hpd}{\hat P^{(n_0,k_R,k_L)}_L}
\newcommand{\pu}{ P^{(k_R,k_L)}_R}
\newcommand{\pd}{ P^{(k_R,k_L)}_L}
\newcommand{\puuz}{P_{R}^{(n_0-1,n_0-1)}\otimes P^{(k_R,k_L)}_R}
\newcommand{\pddz}{P_{L}^{(n_0-1,n_0-1)}\otimes P^{(k_R,k_L)}_L}
\newcommand{\puu}{\tilde P_R}
\newcommand{\pdd}{\tilde P_L}
\newcommand{\qu}[1]{ Q^{(k_R,k_L)}_{R, #1}}
\newcommand{\qd}[1]{ Q^{(k_R,k_L)}_{L,#1}}
\newcommand{\hqu}[1]{ \hat Q^{(n_0,k_R,k_L)}_{R, #1}}
\newcommand{\hqd}[1]{ \hat Q^{(n_0,k_R,k_L)}_{L,#1}}
\newcommand{\ek}[1] {e^{(k)}_{#1}}
\newcommand{\eij}[1] {E^{(k_R,k_L)}_{#1}}
\newcommand{\eijz}[1] {E^{(n_0-1,n_0-1)}_{#1}}
\newcommand{\heij}[1] {\hat E^{(k_R,k_L)}_{#1}}
\newcommand{\cn}{\mathop{\mathrm{CN}(n_0,k_R,k_L)}\nolimits}
\newcommand{\ghd}[1]{\mathop{\mathrm{GHL}(#1,n_0,k_R,k_L,\bbG)}\nolimits}
\newcommand{\ghu}[1]{\mathop{\mathrm{GHR}(#1,n_0,k_R,k_L,\bbD)}\nolimits}
\newcommand{\ghdb}[1]{\mathop{\mathrm{GHL}(#1,n_0,k_R,k_L,\bbG)}\nolimits}
\newcommand{\ghub}[1]{\mathop{\mathrm{GHR}(#1,n_0,k_R,k_L,\bbD)}\nolimits}
\newcommand{\hfu}[1]{{\mathfrak H}_{#1}^R}
\newcommand{\hfd}[1]{{\mathfrak H}_{#1}^L}
\newcommand{\hfui}[1]{{\mathfrak H}_{#1,1}^R}
\newcommand{\hfdi}[1]{{\mathfrak H}_{#1,1}^L}
\newcommand{\hfuz}[1]{{\mathfrak H}_{#1,0}^R}
\newcommand{\hfdz}[1]{{\mathfrak H}_{#1,0}^L}
\newcommand{\CN}{\overline{\hpd}\lmk\mnzk \rmk\overline{\hpu}}
\newcommand{\cnz}[1] {\chi_{#1}^{(n_0)}}
\newcommand{\eu}{\eta_{R}^{(k_R,k_L)}}
\newcommand{\ezu}{\eta_{R}^{(n_0-1,n_0-1)}}
\newcommand{\ed}{\eta_{L}^{(k_R,k_L)}}
\newcommand{\ezd}{\eta_{L}^{(n_0-1,n_0-1)}}
\newcommand{\fii}[1]{f_{#1}^{(k_R,k_L)}}
\newcommand{\fiir}[1]{f_{#1}^{(k_R,0)}}
\newcommand{\fiil}[1]{f_{#1}^{(0,k_L)}}
\newcommand{\fiz}[1]{f_{#1}^{(n_0-1,n_0-1)}}
\newcommand{\zeij}[1] {e_{#1}^{(n_0)}}
\newcommand{\CL}{\ClassA}
\newcommand{\CLn}{\Class_2(n,n_0,k_R,k_L)}
\newcommand{\braket}[2]{\left\langle#1,#2\right\rangle}
\newcommand{\abs}[1]{\left\vert#1\right\vert}
\newtheorem{nota}{Notation}[section]
\def\qed{{\unskip\nobreak\hfil\penalty50
\hskip2em\hbox{}\nobreak\hfil$\square$
\parfillskip=0pt \finalhyphendemerits=0\par}\medskip}
\def\proof{\trivlist \item[\hskip \labelsep{\bf Proof.\ }]}
\def\endproof{\null\hfill\qed\endtrivlist\noindent}
\def\proofof[#1]{\trivlist \item[\hskip \labelsep{\bf Proof of #1.\ }]}
\def\endproofof{\null\hfill\qed\endtrivlist\noindent}

\title{$C^1$-Classification of gapped parent Hamiltonians of quantum spin chains with local symmetry}
\author{
{\sc Yoshiko Ogata}\footnote{Supported in part by
the Grants-in-Aid for
Scientific Research, JSPS.}\\
{\small Graduate School of Mathematical Sciences}\\
{\small The University of Tokyo, Komaba, Tokyo, 153-8914, Japan}
}

\maketitle
\begin{abstract}
We consider the family of gapped Hamiltonians introduced in~\cite{Fannes:1992vq, Nachtergaele:1996vc} on
the quantum spin chain $\bigotimes_{\bbZ}\mn$, {\it with local symmetry} given by a group $G$.
The $G$-symmetric gapped Hamiltonians are given by triples $(k,u,V)$, where $u$ is a projective unitary representation of $G$ on a finite dimensional space $\cc^k$,
and $V$ is an isometry from $\bbC^k$ to $\cc^n\otimes \cc^k$.
We show that Hamiltonians $H_0,H_1$, given by the triples 
 $(k^{(0)}, u^{(0)}, V^{(0)})$ and $(k^{(1)}, u^{(1)}, V^{(1)})$ are $C^1$-equivalent if
the projective representations $u^{(0)}$ and  $u^{(1)}$ are unitary equivalent.
\end{abstract}

\section{Introduction}
Recently, gapped ground state phases attract a lot of attentions \cite{Chen:2010gb}\cite {Chen:2011iq}\cite {Schuch:2011ve}\cite{Bachmann:2011kw} 
\cite{Chen13}
\cite{bo}\cite{SW}.
In quantum spin systems, we say two gapped Hamiltonians are equivalent if and only if they are connected by a continuous path of uniformly
gapped Hamiltonians.  When we further require the path to be $C^1$, we call it $C^1$-classification. It is known that the ground state structure is an invariant of
the $C^1$-classification~\cite{Bachmann:2011kw}. The ultimate goal should be classifying all the gapped Hamiltonians in the world.
When we impose symmetry, the classification problem raises different mathematical question \cite{Schuch:2011ve}\cite{Chen13}.
Two Hamiltonians
which are equivalent in the classification without symmetry may be no longer in the same class if we consider the classification {\it with} symmetry. 

The general framework of classification with symmetry was considered in \cite{bns}. However, it is in general a hard problem to guarantee the existence of the spectral gap along the path rigorously.
As a result, examples of gapped Hamiltonians are quite limited in quantum spin systems whose spatial dimensions are larger than one.
However, for one dimensional systems, there is a recipe of gapped Hamiltonians \cite{Fannes:1992vq}.
We completely classified Hamiltonians given by this recipe, without breaking translation invariance in \cite{bo}. In \cite{Fannes:1992vq},
the existence of the spectral gap is guaranteed by
the primitivity of the associated completely positive map. 
We carried out the classification, by showing that the space of primitive maps, with an upper bound on Kraus rank, is connected, in \cite{bo}.
Recently, an alternative proof  of this fact was introduced in \cite{SW}. Both proof \cite{bo}, and \cite{SW} have their own advantages.
The proof in \cite{SW} is simpler and we don't need to care about the detailed structure, i.e., we don't need to know {\it how the primitivity is guaranteed} in details. Because of this simplicity, this proof can provide the analiticity of the path.
On the other hand, we find in \cite{Ogata1} and \cite{Ogata2} that the argument in \cite{bo} can be extended to the non-pimitive maps. It also has an advantage in constructing examples, as we know how the primitivity is guaranteed concretely.

In this paper, we consider the classification of the class of gapped Hamiltonians, given by the recipe of \cite{Fannes:1992vq}, {\it with} symmetry. The $G$-symmetric gapped Hamiltonians are given by triples $(k,u,V)$, where $u$ is a projective unitary representation of $G$ on a finite dimensional space $\cc^k$,
and $V$ is an isometry from $\bbC^k$ to $\cc^n\otimes \cc^k$.
This problem is considered in ~\cite{Schuch:2011ve}. The difference here is that we  do not break translation invariance. Mathematically, our task is to guarantee the primitivity of the associated CP-maps, {\it and} the symmetry. 
As the recipe in \cite{Fannes:1992vq} cares only about the primitive maps, we can use the method in \cite{SW}. About this problem, \cite{SW} is much stronger method than the one in \cite{bo}.
 
Now let us state our result in detail. See Appendix \ref{sec:nota} for notations.
For $\bbN\ni n\geq 2$, let $\caA$ be the finite dimensional C*-algebra $\caA = \Mat_n$, the algebra of $n\times n$ matrices. 
Throughout this article, this $n$ is fixed as the dimension of the spin under consideration, and we fix an orthonormal basis $\{\psi_\mu\}_{\mu=1}^n$ of $\cc^n$.
We denote the set of all finite subsets in ${\bbZ}$ by ${\mathfrak S}$.
The number of elements in a finite set $\Lambda\subset {\bbZ}$ is denoted by
$|\Lambda|$.
When we talk about intervals in $\bbZ$, $[a,b]$ for $a\le b$,
means the interval in $\bbZ$, i.e., $[a,b]\cap \bbZ$.
We denote the set of all finite intervals 
by ${\mathfrak I}$.
For each $z\in\bbZ$, we let $\caA_{\{x\}}$ be an isomorphic copy of $\caA$ and for any finite subset $\Lambda\subset\bbZ$, $\caA_{\Lambda} = \otimes_{x\in\Lambda}\caA_{\{x\}}$ is the local algebra of observables. 
For finite $\Lambda$, the algebra $\caA_{\Lambda} $ can be regarded as the set of all bounded operators acting on
a Hilbert space $\otimes_{x\in\Lambda}{\bbC}^n$.
We use this identification freely.
If $\Lambda_1\subset\Lambda_2$, the algebra $\caA_{\Lambda_1}$ is naturally embedded in $\caA_{\Lambda_2}$ by tensoring its elements with the identity. Finally,
the algebra $\caA_{\bbZ}$ 
is given as the inductive limit of the algebras $\caA_{\Lambda}$ with $\Lambda\in{\mathfrak S}$.
We denote the set of local observables by $\caA_{\bbZ}^{\rm loc}=\bigcup_{\Lambda\in{\mathfrak S}}\caA_{\Lambda}
$.
For any $x\in\bbZ$, let $\tau_x$ be the shift operator by $x$ on $\caA_\bbZ$. 

The local symmetry is introduced as follows.
Let $G$ be a group. Let $U:G\to \caU_n$ be a unitary representation of $G$ on $\bbC^n$.
Here, $\caU_n$ denotes the set of all unitary matrices on $\bbC^n$.
By $\beta_U$, we denote the product action of $G$ on $\caA_{\bbZ}$  induced by $U$, i.e., 
\begin{align}\label{eq:pa}
\beta_{U_g}\lmk A\rmk=
\lmk \cdots\otimes {U_g}\otimes {U_g}\otimes {U_g}\otimes \cdots\rmk
A\lmk \cdots\otimes {U_g}^{-1}\otimes {U_g}^{-1}\otimes {U_g}^{-1}\otimes \cdots\rmk,
\end{align}
for any 
$A\in \caA_\bbZ$ and  $g\in G$.

An interaction is a map $\Phi$ from 
${\mathfrak S}$ into ${\caA}_{{\mathbb Z}}^{\rm loc}$ such
that $\Phi(X) \in {\caA}_{X}$ 
and $\Phi(X) = \Phi(X)^*$
for $X \in {\mathfrak S}$. 
An interaction $\Phi$ is translation invariant if
$
\Phi(X+j)=\tau_j\lmk
\Phi(X)\rmk,
$
for all $ j\in{\mathbb Z}$ and $X\in  {\mathfrak S}$.
Furthermore, it is of finite range if there exists an $m\in {\mathbb N}$ such that
$
\Phi(X)=0$,
for $X$ with diameter larger than $m$. 
In this case, 
we say that the interaction length of $\Phi$ is less than or equal to $m$.
We denote the set of all translation invariant finite range interactions by $\caJ$. Furthermore, the set of all translation invariant interactions with interaction length less than or equal to $m$ is denoted by $\caJ_m$.
For the product action $\beta_U$ of a group $G$,
we say an interaction $\Phi$ is $\beta_U$-invariant if
$\beta_g\lmk \Phi(X)\rmk=\Phi(X)$ for all $X\in  {\mathfrak S}$.
A natural number $m\in\nan$ and an element $h\in\caA_{[0,m-1]}$,
define an interaction $\Phi_h$ by
\begin{align}\label{hamdef}
\Phi_h(X):=\left\{
\begin{gathered}
\tau_x\lmk h\rmk,\quad \text{if}\quad  X=[x,x+m-1] \quad \text{for some}\quad  x\in\bbZ\\
0,\quad\text{otherwise}
\end{gathered}\right.
\end{align}
for $X\in {\mathfrak S}$.
If $h$ is $\beta_U$-invariant, i.e., $\beta_{U_g}\lmk h\rmk=h$ for all $g\in G$,
then the interaction $\Phi_h$ is also $\beta_U$-invariant.
A Hamiltonian associated with $\Phi$ is a net of self-adjoint operators $H_{\Phi}:=\left((H_{\Phi })_\Lambda\right)_{\Lambda\in{\mathfrak I}}$ such that 
\begin{equation}\label{GenHamiltonian}
\lmk H_{\Phi}\rmk_{\Lambda}:=\sum_{X\subset{\Lambda}}\Phi(X).
\end{equation}
Note that $(H_\Phi)_{\Lambda}\in {\caA}_{\Lambda}$. 
Let us specify what we mean by gapped Hamiltonian in this paper.
\begin{defn}
A Hamiltonian $H:=\left(H_\Lambda\right)_{\Lambda\in{\mathfrak I}}$
associated with a positive translation invariant finite range interaction is \emph{gapped}  
if there exists $\gamma>0$ and $N_0\in\nan$ 
such that the difference 
between the smallest and the next-smallest eigenvalue of $H_\Lambda$, is bounded below by 
$\gamma$,  for all finite intervals $\Lambda\subset\bbZ$ with $|\Lambda|\ge N_0$. 
\end{defn}
In this definition, the smallest eigenvalue can be degenerated in general.

Now we introduce the $C^1$-classification of gapped Hamiltonians with $\beta_U$-symmetry.
We say  $\Phi:[0,1]\ni t\mapsto\Phi(t)\in  {\caJ}$ 
is a continuous and piecewise $C^1$-path if  for each $X\in{\mathfrak S}$, the path
$[0,1]\ni t\mapsto \Phi(t;X)\in {\caA}_X$ is continuous and piecewise $C^1$
with respect to the norm topology.
\begin{defn}[$C^1$-classification of gapped Hamiltonians with symmetry]\label{def:phasec}
Let $U:G\to \caU_n$ be a unitary representation of a group $G$ on $\bbC^n$.
Let $H_0,H_1$ be gapped  Hamiltonians associated with $\beta_U$-invariant interactions
$\Phi_{0},\Phi_{1}\in{\caJ}$.
We say that  $H_0,H_1$ are
$C^1$-equivalent with $\beta_U$-symmetry if the following conditions are satisfied.
\begin{enumerate}
\item There exists $m\in\nan$ and a continuous and piecewise $C^1$-path $\Phi:[0,1]\to {\caJ}_m$ such that $\Phi(0)=\Phi_0$, 
$\Phi(1)= \Phi_1$.
\item  Let   $H(t)$ be the Hamiltonian associated with $\Phi(t)$ for each $t\in[0,1]$.
There are $\gamma>0$, $N_0\in\nan$, and finite intervals $I(t)=[a(t), b(t)]$, whose endpoints $a(t), b(t)$ smoothly depending on $t\in[0,1]$,
such that for all finite intervals $\Lambda\subset\bbZ$ with $|\Lambda|\ge N_0$,
the smallest eigenvalue of $H(t)_\Lambda$ is in $I(t)$ and
the rest of the spectrum is in $[b(t)+\gamma,\infty)$.
\item
For each $t\in[0,1]$, $\Phi(t)$ is $\beta_U$-invariant.
\end{enumerate}
\end{defn}
\begin{rem}
We write $H_0\simeq_{C^1,U} H_1$ when
$H_0,H_1$ are 
$C^1$-equivalent with $\beta_U$-symmetry.
If furthermore the path  $\Phi(t)$ can be taken in ${\caJ}_m$ we write
$H_0\simeq_{C^1,U,m}H_1$.
\end{rem}

Now let us introduce the family of Hamiltonians we consider in this paper.
As an input, we prepare the following triple.
\begin{defn}\label{def:sp}
Let $G$ be a group and 
$c : G\times G\to \bbT$ a $2$-cocycle of $G$. Let $U:G\to \caU_n$ be a unitary representation of $G$ on $\bbC^n$.
Let $k$ be a natural number, $u_g$ a projective unitary representation of $G$ on $\bbC^k$ with respect to
the $2$-cocycle $c$,
and $V$ an isometry from $\bbC^k$ to $\bbC^n\otimes\bbC^k$.
We denote by $\caSP(n,G,U,c)$ the set of all such triple $(k, u, V)$ which satisfies the following conditions.
\begin{enumerate}
\item[(i)] For any $g\in G$, we have
\begin{align}\label{eq:uui}
\lmk {U_g}\otimes u_g\rmk V=V u_g.
\end{align}
\item[(ii)] Define $\vv=\vv_{(k,u,V)}=(v_1,\ldots, v_n)\in\mk^{\times n}$ by
\begin{align}\label{eq:Vv}
V x=\sum_{\mu=1}^{n}\psi_{\mu}\otimes v_{\mu}^* x,\quad
x\in\bbC^k.
\end{align}
(Recall that $\{\psi_{\mu}\}_{\mu}$ is the fixed CONS of $\cc^n$.)
Define the completely positive map $T_{\vv}\;:\; \mk\to\mk$ by
\begin{align}
T_{\vv}(X):=\sum_{\mu=1}^n v_{\mu}Xv_{\mu}^*,\quad X\in\mk.
\end{align}
Then $T_{\vv}$ is primitive.
\end{enumerate}
\end{defn}
The definition of primitivity can be found in \cite{Wolf:2012aa}, for example.

From the $n$-tuple of $k\times k$ matrices $\vv=\vv_{(k,u,V)}=(v_1,\ldots, v_n)\in\mk^{\times n}$
associated with $(k, u, V)\in \caSP(n,G,U,c)$, we construct our interaction following the recipe of
\cite{Fannes:1992vq}.
For $l\in\nan$ and $\mu^{(l)}=(\mu_0,\mu_1,\ldots,\mu_{l-1})\in\{1,\cdots,n\}^{\times l}$,
we use the notation
\begin{align}
\widehat{v_{\mu^{(l)}}}:=
v_{\mu_0}v_{\mu_1}\cdots v_{\mu_{l-1}}\in\mk,\quad
\widehat{\psi_{\mu^{(l)}}}:=
\bigotimes_{i=0}^{l-1}\psi_{\mu_i}\in\bigotimes_{i=0}^{l-1}\cc^n.
\end{align}
\begin{defn}
Let  $(k, u, V)\in \caSP(n,G,U,c)$ and $\vv=\vv_{(k,u,V)}=(v_1,\ldots, v_n)\in\mk^{\times n}$ the $n$-tuple
associated to it.
For each $l\in\nan$,
define
$\Gamma_{l,\vv}\;:\; \mk\to\bigotimes_{i=0}^{l-1}\cc^n$
by
\begin{align}\label{eq:gbdef}
\Gamma_{l,\vv}\lmk X\rmk
=\sum_{\mu^{(l)}\in \{1,\cdots,n\}^{\times l}}\lmk\Tr X \lmk\widehat{v_{\mu^{(l)}}} \rmk^*\rmk\widehat{\psi_{\mu^{(l)}}},\quad
X\in\mk,
\end{align}
and set
$
\cgv{l}:=\Ran\Gamma_{l,\vv}\subset \bigotimes_{i=0}^{l-1}\cc^n.
$
Furthermore, we denote by
$G_{l,\vv}$
the orthogonal projection onto $\cgv{l}$ in $\bigotimes_{i=0}^{l-1}\cc^n$.
We set $h_{l,\vv}:=1-G_{l,\vv}$ and $\Phi_{l,\vv}:=\Phi_{h_{l,\vv}}$.
(Recall (\ref{hamdef}).)
\end{defn}

It is known that this recipe gives a Hamiltonian with a spectral gap \cite{Fannes:1992vq}.
\begin{lem}\label{lem:spgap}
Let  $(k, u, V)\in \caSP(n,G,U,c)$ and $\vv=\vv_{(k,u,V)}=(v_1,\ldots, v_n)\in\mk^{\times n}$ the $n$-tuple
associated to it.
Then, for any $m\in\nan$, the interaction $\Phi_{m,\vv}$ is $\beta_U$-invariant.
For $m\ge k^4+1$ the Hamiltonian $H_{\Phi_{m,\vv}}$ is gapped.
\end{lem}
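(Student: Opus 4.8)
I would prove the two assertions separately, starting with the $\beta_U$-invariance, which holds for every $m$. By the remark following \eqref{hamdef} it is enough to show that $h_{m,\vv}=1-G_{m,\vv}$ is $\beta_U$-invariant, i.e.\ that $U_g^{\otimes m}$ commutes with the orthogonal projection $G_{m,\vv}$ onto $\cgv{m}$; since each $U_g^{\otimes m}$ is unitary and $g$ ranges over $G$, this reduces to $U_g^{\otimes m}\,\cgv{m}=\cgv{m}$ for all $g\in G$. First I would unpack \eqref{eq:uui} through \eqref{eq:Vv}: writing $(U_g)_{\nu\mu}$ for the matrix entries of $U_g$ in the basis $\{\psi_\mu\}$ and comparing the $\psi_\nu$-components of $(U_g\otimes u_g)Vx$ and $Vu_gx$ gives $\sum_\mu (U_g)_{\nu\mu}\,u_g v_\mu^{*}=v_\nu^{*}u_g$; taking adjoints and using that $u_g$ is unitary yields the covariance relation
\begin{align}
u_g^{-1}v_\nu u_g=\sum_{\mu=1}^{n}\overline{(U_g)_{\nu\mu}}\,v_\mu ,\qquad \nu=1,\dots,n,\ \ g\in G .
\end{align}
Since $\Ad(u_g^{-1})$ is a phase-independent automorphism of $\mk$, iterating this over the $m$ letters of a word shows that $\Ad(u_g^{-1})$ sends each $\widehat{v_{\mu^{(m)}}}$ to a linear combination of the $\widehat{v_{\nu^{(m)}}}$ with coefficients built from the entries of $U_g^{\otimes m}$. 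Substituting this into the definition \eqref{eq:gbdef} of $\Gamma_{m,\vv}$ and rearranging the resulting double sum, I expect to obtain
\begin{align}
U_g^{\otimes m}\,\Gamma_{m,\vv}(X)=\Gamma_{m,\vv}\bigl(u_g X u_g^{-1}\bigr),\qquad X\in\mk .
\end{align}
As $X\mapsto u_g X u_g^{-1}$ is a bijection of $\mk$, taking ranges gives $U_g^{\otimes m}\,\cgv{m}=\cgv{m}$, hence $[U_g^{\otimes m},G_{m,\vv}]=0$, so $\beta_{U_g}(h_{m,\vv})=h_{m,\vv}$ and $\Phi_{m,\vv}=\Phi_{h_{m,\vv}}$ is $\beta_U$-invariant by the remark after \eqref{hamdef}. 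The only point needing care here is keeping the non-commuting order of the factors $v_{\mu_i}$ and the complex conjugations consistent.

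For the gap, hypothesis (ii) of Definition \ref{def:sp} says that the CP map $T_\vv$ is primitive, which is precisely the standing assumption under which \cite{Fannes:1992vq} (see also \cite{Nachtergaele:1996vc}) shows the frustration-free Hamiltonian built from $G_{m,\vv}$ to be gapped, with the associated finitely correlated state as its unique ground state, once the interaction length $m$ is large enough. Concretely I would go through the \emph{intersection property}: once $m$ exceeds the injectivity length of $\vv$ --- the least $l$ for which $\spa\{\widehat{v_{\mu^{(l)}}}:\mu^{(l)}\in\{1,\dots,n\}^{\times l}\}=\mk$, equivalently the least $l$ with $\Gamma_{l,\vv}$ injective --- and leaves an overlap of comparable size, the common kernel of the shifted local terms on any interval of length $L\ge m$ coincides with $\cgv{L}$, and the gap then follows from the martingale (Knabe-type) estimate of \cite{Nachtergaele:1996vc}. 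Since $\mk$ has complex dimension $k^2$, the quantum Wielandt-type bound for primitive CP maps (see e.g.\ \cite{Wolf:2012aa}) controls the injectivity length by a fixed polynomial in $k$, and $m\ge k^4+1$ is a convenient (not necessarily optimal) choice exceeding it with room for the required overlap. No additional work with the symmetry is needed at this step, the invariance already having been established for all $m$.

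The genuine obstacle is the gap: that part is imported essentially wholesale from \cite{Fannes:1992vq, Nachtergaele:1996vc}, so the real task is to match hypotheses --- confirming that ``primitive'' in Definition \ref{def:sp}(ii) is exactly their standing assumption, and pinning down why the explicit threshold $k^4+1$ suffices. If instead a self-contained treatment is wanted, the substantive step is to reprove the intersection property with explicit constants: show that primitivity forces $\spa\{\widehat{v_{\mu^{(l)}}}\}$ to fill $\mk$ for $l$ bounded by a fixed power of $k$, deduce injectivity of $\Gamma_{l,\vv}$, and then run the standard argument that an injective block yields a gapped parent Hamiltonian. The $\beta_U$-invariance, by contrast, is routine bookkeeping.
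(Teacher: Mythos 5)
Your proof is correct, but the route you take to the $\beta_U$-invariance is genuinely different from the paper's. The paper introduces the auxiliary positive operator
\begin{equation*}
X_m=\sum_{\mu^{(m)},\nu^{(m)}}\Tr\!\left(\left(\widehat{v_{\mu^{(m)}}}\right)^*\widehat{v_{\nu^{(m)}}}\right)\,\ket{\widehat{\psi_{\mu^{(m)}}}}\!\bra{\widehat{\psi_{\nu^{(m)}}}},
\end{equation*}
proves that its support projection is exactly $G_{m,\vv}$, and then establishes $\beta_{U_g}(X_m)=X_m$ by passing to the completely positive map $\bbE_A(X)=V^*(A\otimes X)V$, its covariance $\bbE_{\Ad(U_g^*)A}=\Ad u_g^*\circ\bbE_A\circ\Ad u_g$, and the trace-duality identity $\Tr\!\left(X_m\bigotimes_i A_i\right)=\Tr\!\left(\bbE_{A_0}\circ\cdots\circ\bbE_{A_{m-1}}(\unit)\right)$. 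You instead work directly with the map $\Gamma_{m,\vv}$: from $(U_g\otimes u_g)V=Vu_g$ you extract the covariance $u_g^{-1}v_\nu u_g=\sum_\mu\overline{(U_g)_{\nu\mu}}\,v_\mu$, iterate it over the $m$ factors of $\widehat{v_{\mu^{(m)}}}$, and conclude $U_g^{\otimes m}\Gamma_{m,\vv}(X)=\Gamma_{m,\vv}(u_gXu_g^{-1})$, whence $U_g^{\otimes m}$ preserves $\cgv{m}=\Ran\Gamma_{m,\vv}$ and so commutes with the projection $G_{m,\vv}$. I checked the bookkeeping and it is right (with $u_g$ unitary, $u_g^{-1}(\widehat{v_{\nu^{(m)}}})^*u_g=(u_g^{-1}\widehat{v_{\nu^{(m)}}}u_g)^*$, so the conjugations line up). Your route is shorter and avoids the detour through $X_m$ and $\bbE$; the paper's route has the side benefit that the transfer-operator machinery $\bbE$ and the operator $X_m$ recur in the finitely-correlated-state framework, so the authors set them up once and reuse the covariance (\ref{eq:aug}) later (e.g.\ to derive (\ref{eq:utg})). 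For the spectral gap both you and the paper defer wholesale to \cite{Fannes:1992vq}: the paper cites Lemma 3.1 of \cite{bo} together with the quantum Wielandt inequality of \cite{Sanz:2010aa} for the explicit threshold $m\ge k^4+1$, and your description of the intersection property plus the Wielandt bound on the injectivity length matches that in substance, so there is no gap on this point either.
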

\begin{proof}
To show the first statement, let us consider a positive operator
\[
X_m:=\sum_{\mu^{(m)},\nu^{(m)}\in \{1,\cdots,n\}^{\times m}}
\lmk \Tr \lmk \lmk\widehat{v_{\mu^{(m)}}} \rmk^*\widehat{v_{\nu^{(m)}}} \rmk
\rmk
\ket{\widehat{\psi_{\mu^{(m)}}}}\bra{\widehat{\psi_{\nu^{(m)}}}}.
\]
We claim that the support $s(X_m)$ of $X_m$ is $G_{m,\vv}$.
To see $G_{m,\vv}\le s(X_m)$, let us consider an arbitrary $\xi\in \bigotimes_{i=0}^{m-1}\cc^n$, and
set $Z_{\xi}:=\sum_{\mu^{(m)}}\braket{\xi}{\widehat{\psi_{\mu^{(m)}}}}\lmk\widehat{v_{\mu^{(m)}}} \rmk^*\in\mk$.
For any $Y\in\mk$, we have $\braket{\xi}{\Gamma_{m,\vv}(Y)}=\Tr Y Z_\xi$.
We also have $\braket{\xi}{X\xi}=\Tr Z_{\xi} Z_{\xi}^*$.
Therefore, from the Cauchy-Schwartz inequality, we obtain
\[
\lv \braket{\xi}{\Gamma_{m,\vv}(Y)}\rv^2=\lv \Tr Y Z_\xi\rv^2
\le \Tr Y Y^*\cdot \Tr Z_\xi^*Z_\xi
=\Tr Y Y^*\cdot \braket{\xi}{X\xi},
\]
for any  $Y\in\mk$ and $\xi\in \bigotimes_{i=0}^{m-1}\cc^n$. This implies $G_{m,\vv}\le s(X_m)$.
To prove $G_{m,\vv}\ge s(X_m)$,
let $\{\ek{i,j}\}_{i,j=1}^k$ be the matrix units of $\mk$.
By the straightforward  calculation, we obtain
\[
X_m=\sum_{i,j=1}^k \ket{\Gamma_{m,\vv}\lmk \ek{i,j}\rmk}\bra{\Gamma_{m,\vv}\lmk \ek{i,j}\rmk}.
\]
This proves $G_{m,\vv}\ge s(X_m)$.

From this, we see that in order to show $G_{m,\vv}$ is $\beta_U$-invariant, it suffices to show that $X_m$ is $\beta_U$-invariant.
For this, we 
define a completely positive map $\bbE: \mn\otimes\mn\to\mk$ by
\[
\bbE\lmk A\otimes X\rmk := V^* \lmk A\otimes X\rmk V 
=\sum_{\mu,\nu=1}^{n} \braket{\psi_\mu}{A\psi_{\nu}} v_{\mu} X v_{\nu}^*
\quad
A\in \mn,\quad X\in\mk.
\]
For each $A\in\mn$, we set $\bbE_A:\mk\to\mk$ by $\bbE_A(X)=V^* \lmk A\otimes X\rmk V$, 
$X\in\mk$.
From (i) of the Definition \ref{def:sp}, we have
\begin{align}\label{eq:aug}
&\bbE_{\Ad\lmk  {U_g}^*\rmk\lmk A\rmk}\lmk X\rmk
= V^* \lmk {U_g}^*A {U_g}\otimes X\rmk V
=V^*\lmk {U_g}\otimes u_g\rmk^* \lmk A\otimes u_gXu_g^* \rmk
\lmk {U_g}\otimes u_g\rmk V\notag\\
&=\lmk V u_g\rmk^* \lmk A\otimes u_gXu_g^* \rmk
V u_g
=\Ad u_g^*\circ \bbE_{A}\circ \Ad u_g\lmk X\rmk.
\end{align}
(Here $\Ad \lmk {U_g}^*\rmk(X):={U_g}^* X {U_g}$ etc.)
On the other hand, by the straightforward calculation we obtain the following formula for any $A_0,\ldots, A_{m-1}\in\mn$,
\begin{align}\label{eq:xfm}
\Tr_{\bigotimes_{i=0}^{m-1}\mn}\lmk X_m
\bigotimes_{i=0}^{m-1} A_i\rmk
=\Tr_{\mk}\lmk \bbE_{A_0}\circ\cdots\circ\bbE_{A_{m-1}}\lmk 1\rmk\rmk.
\end{align}
Now we show the $\beta_U$-invariance of $X_m$.
For all $A_0,\ldots, A_{m-1}\in\mn$,
\begin{align*}
&\Tr_{\bigotimes_{i=0}^{m-1}\mn}\lmk \Ad\lmk  U_g^{\bigotimes m-1}\rmk \lmk X_m\rmk
\bigotimes_{i=0}^{m-1} A_i\rmk
=\Tr_{\bigotimes_{i=0}^{m-1}\mn}\lmk  X_m
\lmk \bigotimes_{i=0}^{m-1} \Ad \lmk U_g^*\rmk\lmk A_i\rmk\rmk \rmk\\
&=\Tr_{\mk}\lmk \bbE_{\Ad U_g^*\lmk A_0\rmk}\circ\cdots\circ\bbE_{\Ad U_g^*\lmk A_{m-1}\rmk}\lmk 1\rmk\rmk
=\Tr_{\mk}\lmk \bbE_{A_0}\circ\cdots\circ\bbE_{A_{m-1}}\lmk 1\rmk\rmk
=\Tr_{\bigotimes_{i=0}^{m-1}\mn}\lmk X_m
\bigotimes_{i=0}^{m-1} A_i\rmk.
\end{align*}
Here, in the second and the fourth equality, we used (\ref{eq:xfm}), and for the third equality, we used 
(\ref{eq:aug}).
This proves $\beta_{U_g}(X_m)=X_m$, for all $g\in G$.

The second statement follows from \cite{Fannes:1992vq}, except for the
condition $m\ge k^4+1$ which is discussed in Lemma 3.1 of \cite{bo}
and basically quantum {Wielandt's} inequality \cite{Sanz:2010aa}.
\end{proof}

\begin{defn}
Let $G$ be a group and 
$c : G\times G\to \bbT$ a $2$-cocycle of $G$.
Let $(u_0, \bbC^{k_0})$, $(u_1,\bbC^{k_1})$
be finite dimensional projective unitary representations of  $G$ with respect to the $2$-cocycle $c$.
We say $(u^{(0)}, \bbC^{k_0})$ and $(u^{(1)},\bbC^{k_1})$ are unitary equivalent if $k:=k_0=k_1$ and
there exists a unitary matrix $W\in\caU_k$ such that 
$W u^{(0)}_g=u^{(1)}_gW$, for all $g\in G$.
\end{defn}
Here is the main Theorem of this paper.
\begin{thm}\label{thm:gmain}
Let $G$ be a group and 
$c : G\times G\to \bbT$ a $2$-cocycle of $G$. Let $U:G\to \caU_n$ be a unitary representation of $G$ on $\bbC^n$.
Let $(k^{(0)}, u^{(0)}, V^{(0)})$ and $(k^{(1)}, u^{(1)}, V^{(1)})$ be elements in 
$\caSP(n,G,U,c)$, and $\vv_0:=\vv(k^{(0)}, u^{(0)}, V^{(0)})$,
 $\vv_1:=\vv(k^{(1)}, u^{(1)}, V^{(1)})$, the $n$-tuples associated to them via (\ref{eq:Vv}).
Assume that the projective representations $(u^{(0)}, \bbC^{k_0})$ and $(u^{(1)},\bbC^{k_1})$ are unitary equivalent.
Then we have $H_{\Phi_{m,\vv_0}}\simeq_{C^1,U,m}
H_{\Phi_{m,\vv_1}}$, for any $m\ge k^4+1$.
\end{thm}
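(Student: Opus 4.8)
The plan is to exhibit the two Hamiltonians as the endpoints of a single path of parent Hamiltonians $H_{\Phi_{m,\vv(t)}}$ built from a path $(k,u,V(t))$ of triples in $\caSP(n,G,U,c)$, so that Lemma \ref{lem:spgap} supplies the $\beta_U$-invariance and the spectral gap along the path. First I reduce to a common projective representation. By the definition of unitary equivalence, $k:=k_0=k_1$ and there is $W\in\caU_k$ with $Wu^{(0)}_g=u^{(1)}_gW$ for all $g\in G$. Conjugating the first triple, set $\widetilde{V}^{(0)}:=(\unit\otimes W)V^{(0)}W^*$ and $\widetilde{u}^{(0)}_g:=Wu^{(0)}_gW^*=u^{(1)}_g$; then $(k,\widetilde{u}^{(0)},\widetilde{V}^{(0)})\in\caSP(n,G,U,c)$ (relation (\ref{eq:uui}) is preserved, and $T_{\widetilde{\vv}_0}=\Ad(W)\circ T_{\vv_0}\circ\Ad(W^*)$ stays primitive), and from $\widetilde{v}_\mu=Wv_\mu W^*$ one computes $\Gamma_{l,\widetilde{\vv}_0}(X)=\Gamma_{l,\vv_0}(W^*XW)$, so the ranges satisfy $\caG_{l,\widetilde{\vv}_0}=\caG_{l,\vv_0}$ and hence $\Phi_{l,\widetilde{\vv}_0}=\Phi_{l,\vv_0}$; in particular $H_{\Phi_{m,\widetilde{\vv}_0}}=H_{\Phi_{m,\vv_0}}$. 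Thus we may assume the two triples are $(k,u,V^{(0)})$ and $(k,u,V^{(1)})$ with the same $u$ but possibly different isometries.

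Next I describe the manifold of admissible isometries. Regard $g\mapsto U_g\otimes u_g$ as a projective representation on $\bbC^n\otimes\bbC^k$ with the same $2$-cocycle $c$ as $u$ (this works because $U$ is an honest representation); then (\ref{eq:uui}) says precisely that $V$ intertwines $(u,\bbC^k)$ with $(U\otimes u,\bbC^n\otimes\bbC^k)$. Decomposing $u\cong\bigoplus_{i\in I}u_i\otimes\unit_{d_i}$ into $c$-isotypic components, with the $u_i$ pairwise inequivalent $c$-irreducible and $d_i\ge1$, and letting $D_i$ denote the multiplicity of $u_i$ inside $U\otimes u$ (which satisfies $D_i\ge d_i$ because the isometry $V^{(0)}$ embeds $u$ into $U\otimes u$), Schur's lemma identifies the space of such intertwiners with $\bigoplus_{i\in I}\mathrm{Hom}(\bbC^{d_i},\bbC^{D_i})$, and under this identification $V^*V=\unit$ becomes the requirement that every block is an isometry $\bbC^{d_i}\to\bbC^{D_i}$. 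Hence the set $\caM$ of isometries obeying (\ref{eq:uui}) is, via a real-analytic diffeomorphism, the product $\prod_{i\in I}\mathrm{St}(d_i,D_i)$ of complex Stiefel manifolds; it is nonempty (it contains $V^{(0)}$) and, being a product of connected real-analytic manifolds, is connected and path-connected by real-analytic arcs.

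The heart of the argument is to connect $V^{(0)}$ to $V^{(1)}$ \emph{inside} the primitive locus $\caM_{\mathrm{prim}}:=\{V\in\caM:T_\vv\text{ primitive}\}$. This set is open in $\caM$ because primitivity is an open condition, and it contains $V^{(0)},V^{(1)}$ by hypothesis. I claim $\caM_{\mathrm{prim}}$ is connected by a continuous, piecewise real-analytic path. This is the $G$-equivariant analogue of the theorem of \cite{SW} that the set of primitive completely positive maps on $\mk$ of Kraus rank $\le n$ is path-connected by piecewise analytic paths: the point is that every move used there --- small generic perturbations together with interpolations through explicit canonical forms --- can be carried out inside $\caM$, that is, on the Stiefel factors $\mathrm{St}(d_i,D_i)$ and through fixed $G$-equivariant reference points of $\caM$, so that (\ref{eq:uui}) is preserved, the Kraus rank never exceeds $n$, and the (closed, proper) non-primitive locus of $\caM$ is routed around. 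This produces a continuous, piecewise real-analytic path $[0,1]\ni t\mapsto V(t)\in\caM_{\mathrm{prim}}$ from $V^{(0)}$ to $V^{(1)}$, so that $(k,u,V(t))\in\caSP(n,G,U,c)$ for every $t$. I expect this step --- re-running the argument of \cite{SW} under the symmetry constraint, in particular checking that the non-primitive set does not disconnect $\caM_{\mathrm{prim}}$ and that analyticity is retained --- to be the main obstacle.

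Finally I assemble the path of Hamiltonians. Fix any $m\ge k^4+1$; this bound is uniform because $k$ is fixed along the path. Set $\vv(t):=\vv_{(k,u,V(t))}$ and $\Phi(t):=\Phi_{m,\vv(t)}\in\caJ_m$. By Lemma \ref{lem:spgap}, $\Phi(t)$ is $\beta_U$-invariant and $H_{\Phi(t)}$ is gapped for every $t$, which already gives condition (3) of Definition \ref{def:phasec} and reduces conditions (1) and (2) to regularity of the path and uniformity of the gap, respectively. For regularity, $t\mapsto v_\mu(t)$ is piecewise real-analytic, and for $m\ge k^4+1$ the quantum Wielandt inequality (\cite{Sanz:2010aa}, cf. Lemma 3.1 of \cite{bo}) makes $\Gamma_{m,\vv(t)}$ injective for all $t$; hence $\dim\caG_{m,\vv(t)}=k^2$ is constant, so the orthogonal projection $G_{m,\vv(t)}$, and therefore $h_{m,\vv(t)}=1-G_{m,\vv(t)}$ and $\Phi(t;X)$, depend continuously and piecewise-analytically on $t$ in norm. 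For the gap, each $H_{\Phi(t)}$ is frustration-free with smallest eigenvalue $0$ --- the matrix product vectors are exact zero-energy ground states --- so one may take $I(t)=\{0\}$, with constant hence smooth endpoints; and the lower bound on the spectral gap furnished by \cite{Fannes:1992vq} depends continuously on $\vv(t)$, so it is bounded below by a single $\gamma>0$ on the compact interval $[0,1]$, with a uniform $N_0\in\nan$. Therefore $H_{\Phi_{m,\vv_0}}=H_{\Phi_{m,\widetilde{\vv}_0}}\simeq_{C^1,U,m}H_{\Phi_{m,\vv_1}}$, which is Theorem \ref{thm:gmain}.
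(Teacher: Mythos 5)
Your first reduction is correct, and in one respect slightly sharper than the paper's Lemma~\ref{lem:uch}: from $\widetilde{v}_\mu=Wv_\mu W^*$ you get $\Gamma_{l,\widetilde{\vv}_0}(X)=\Gamma_{l,\vv_0}(W^*XW)$, hence the ranges, projections, and interactions coincide exactly, so $H_{\Phi_{m,\widetilde{\vv}_0}}=H_{\Phi_{m,\vv_0}}$ rather than merely being joined by a path as in the paper. Your identification of the set of intertwining isometries, via Schur's lemma, with a product of complex Stiefel manifolds is also the content of Lemma~\ref{lem:vst}. The trouble is the step you flag yourself as ``the main obstacle,'' and it is a genuine gap, not a deferred routine check: you assert that the connectivity theorem of \cite{SW} (primitive CP maps of bounded Kraus rank form a path-connected set) can be re-run with every move confined to your $G$-equivariant manifold $\caM$ while preserving (\ref{eq:uui}) and Kraus rank. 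That is not proved, and it is not small --- the deformations in \cite{SW} are not manifestly $G$-equivariant, so one would have to redo the whole connectivity analysis under the symmetry constraint, which is a substantial independent project.

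The paper's Lemma~\ref{lem:psp} avoids this entirely and only needs the SW \emph{primitivity criterion} (Lemma~\ref{lem:pg}), never the SW \emph{connectivity theorem}. Using Lemma~\ref{lem:vst} it writes $V_1={W'}^*\,e^{iH}\,\bigl(\bigoplus_{\alpha}\unit_{V_\alpha}\otimes\omega_{0,\alpha}\bigr)W$ for a self-adjoint $H$ commuting with the decomposition, then complexifies the exponent: $V(z):={W'}^*e^{izH}\bigl(\bigoplus_{\alpha}\unit_{V_\alpha}\otimes\omega_{0,\alpha}\bigr)W$ for $z\in\bbC$, so that the intertwining identity (\ref{eq:uui}) holds for all $z$ automatically and each $v_\mu(z)$ is entire. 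By Lemma~\ref{lem:pg}, primitivity of $T_{\vv(z)}$ is equivalent to nonvanishing of one of finitely many entire functions $f_{\mm}(z)$; since $0,1\notin\caZ:=\bigcap_{\mm}\{f_{\mm}=0\}$, not all $f_{\mm}$ vanish identically, so $\caZ\cap\{|z|\le 2\}$ is finite, and a smooth path $\varpi:[0,1]\to\bbC$ from $0$ to $1$ can be routed around it. For non-real $\varpi(t)$ the map $V(\varpi(t))$ is no longer an isometry, so a final renormalization through the Perron root $r_{T_{\vv(\varpi(t))}}$ and Perron eigenvector $e_{\vv(\varpi(t))}$ --- which commutes with $u_g$ by (\ref{eq:utg}) --- produces an honest isometry $\hat V(t)$ still satisfying (\ref{eq:uui}) and still primitive. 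This complexify-and-dodge-a-finite-zero-set device, together with the Perron renormalization, is the actual engine of the proof and is what your proposal is missing.
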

\section{Proof of Theorem \ref{thm:gmain}}
We introduce an equivalence relation in $\caSP(n,G,U,c)$.
\begin{defn}\label{def:spath}
Let $(k^{(0)}, u^{(0)}, V^{(0)})$ and $(k^{(1)}, u^{(1)}, V^{(1)})$ be elements in 
$\caSP(n,G,U,c)$. We say $(k^{(0)}, u^{(0)}, V^{(0)})$ and $(k^{(1)}, u^{(1)}, V^{(1)})$ are equivalent,
 if the following holds.
\begin{enumerate}
\item$k:=k^{(0)}=k^{(1)}$.
\item There exists a map $u:[0,1]\times G\to \caU_{k}$ such that
for any $g\in G$, $[0,1]\ni t\to u(\cdot,g)\in \caU_{k}$ is $C^{\infty}$.
\item There exits a $C^{\infty}$-map $V:[0,1]\to B(\bbC^k, \bbC^{n}\otimes \bbC^{k})$
such that $(k,u(t,\cdot), V(t))\in\caSP(n,G,U,c)$, for all $t\in[0,1]$,
with $(k, u(0,\cdot), V(0))=(k^{(0)}, u^{(0)}, V^{(0)})$ and $(k, u(1,\cdot), V(1))=(k^{(1)}, u^{(1)}, V^{(1)})$.
\end{enumerate}
When $(k^{(0)}, u^{(0)}, V^{(0)})$ and $(k^{(1)}, u^{(1)}, V^{(1)})$ are equivalent, we write $(k^{(0)}, u^{(0)}, V^{(0)})\simeq_{\caSP}(k^{(1)}, u^{(1)}, V^{(1)})$.
\end{defn}
In order to prove Theorem \ref{thm:gmain}, we show 
$(k^{(0)}, u^{(0)}, V^{(0)})\simeq_{\caSP}(k^{(1)}, u^{(1)}, V^{(1)})$.
Note that if $u$ is a projective representation of $G$ with respect to a $2$-cocycle $c$, for any $W\in\caU_k$,
the map $G\ni g\mapsto W u_g W^*$ defines a a projective unitary representation of $G$ with respect to $c$.
We denote it by $\Ad W(u)$. 
\begin{lem}\label{lem:uch}
Let $(k,u,V)\in\caSP(n,G,U,c)$ and $W\in\caU_k$.
Then the triple $(k,\Ad W(u) , \lmk \unit\otimes W\rmk VW^*)$
belongs to $\caSP(n,G,U,c)$. 
In particular, we have
\[
(k,u,V)\simeq_{\caSP}(k,\Ad W(u) , \lmk \unit\otimes W\rmk VW^*).
\]
\end{lem}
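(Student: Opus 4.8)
The plan is to check directly that the triple $(k,\Ad W(u),(\unit\otimes W)VW^*)$ meets conditions (i) and (ii) of Definition~\ref{def:sp}, and then to produce the required smooth path by conjugating $(k,u,V)$ along a $C^\infty$-path in $\caU_k$ joining $\unit$ to $W$.

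First I would write $u'_g:=Wu_gW^*$ and $V':=(\unit\otimes W)VW^*$. As already observed in the text, $u'$ is again a projective unitary representation of $G$ for the cocycle $c$; moreover $V'$ is an isometry since $(V')^*V'=WV^*VW^*=\unit$. Condition (i) then follows from a one-line computation using $(U_g\otimes u_g)V=Vu_g$:
\begin{align*}
(U_g\otimes u'_g)V'
=(U_g\otimes Wu_g)VW^*
=(\unit\otimes W)(U_g\otimes u_g)VW^*
=(\unit\otimes W)Vu_gW^*
=V'u'_g.
\end{align*}
For condition (ii), I would read off the $n$-tuple $\vv'=\vv_{(k,u',V')}$ from (\ref{eq:Vv}): since $V'x=(\unit\otimes W)\sum_{\mu}\psi_\mu\otimes v_\mu^*W^*x=\sum_\mu\psi_\mu\otimes(Wv_\mu^*W^*)x$, taking adjoints gives $v'_\mu=Wv_\mu W^*$, hence
\[
T_{\vv'}=\Ad W\circ T_{\vv}\circ\Ad W^{*},\qquad\text{so that}\qquad T_{\vv'}^{\circ\ell}=\Ad W\circ T_{\vv}^{\circ\ell}\circ\Ad W^{*}
\]
for every $\ell$. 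Since $\Ad W$ and $\Ad W^{*}$ are automorphisms of $\mk$ carrying strictly positive elements to strictly positive elements, $T_{\vv'}$ is primitive whenever $T_{\vv}$ is. This establishes $(k,u',V')\in\caSP(n,G,U,c)$.

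For the equivalence, I would choose a Hermitian $H\in\mk$ with $W=e^{iH}$ (spectral theorem) and set $W(t):=e^{itH}$, a $C^\infty$-path (indeed real-analytic) in $\caU_k$ with $W(0)=\unit$ and $W(1)=W$. Then define $u(t,g):=W(t)u_gW(t)^*$ and $V(t):=(\unit\otimes W(t))VW(t)^*$. Applying the first part with $W(t)$ in place of $W$ shows $(k,u(t,\cdot),V(t))\in\caSP(n,G,U,c)$ for every $t$; the dependence on $t$ is $C^\infty$ because $t\mapsto W(t)$ is smooth; and the endpoints are $(k,u,V)$ at $t=0$ and $(k,\Ad W(u),V')$ at $t=1$. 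Thus Definition~\ref{def:spath} is satisfied and $(k,u,V)\simeq_{\caSP}(k,\Ad W(u),V')$.

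There is no real obstacle here: each step is a direct verification. The only points deserving attention are matching the convention (\ref{eq:Vv}) so that the adjoints land in the right place and one obtains $v'_\mu=Wv_\mu W^*$, and observing — what is anyway immediate — that primitivity of a completely positive map is preserved under conjugation by a fixed unitary.
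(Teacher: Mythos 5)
Your proof is correct and follows essentially the same route as the paper: you verify condition (i) by the same direct computation, observe that the Kraus operators transform as $v'_\mu = W v_\mu W^*$ so that primitivity is inherited, and build the smooth path by exponentiating a Hermitian logarithm of $W$. The only minor addition is that you spell out the intertwining $T_{\vv'} = \Ad W\circ T_{\vv}\circ\Ad W^*$ to justify the preservation of primitivity, which the paper simply asserts; this is a small but welcome clarification, not a different approach.
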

\begin{proof}
It is clear that $\Ad W(u)$ is a projective unitary representation of $G$ with respect to a $2$-cocycle $c$, and that $\lmk \unit\otimes W\rmk VW^*$ is an isometry.
It is straightforward to check
\[
\lmk {U_g}\otimes \Ad W\lmk u_g\rmk\rmk \lmk \unit\otimes W\rmk VW^*=\lmk \unit\otimes W\rmk VW^* \Ad W\lmk u_g\rmk.
\]
Let $\vv$ be the $n$-tuple associated to $(k,u,V)$ via (\ref{eq:Vv}).
We then have 
\[
\lmk \unit\otimes W\rmk VW^* x=\sum_{\mu=1}^{n}\psi_{\mu}\otimes \lmk Wv_{\mu}W\rmk^* x,\quad
x\in\bbC^k.
\]
The primitivity of $T_\vv$ implies the primitivity of
$T_{(Wv_\mu W^*)_{\mu=1}^n}$. 
Therefore, we have $(k,\Ad W(u) , \lmk \unit\otimes W\rmk VW^*)\in \caSP(n,G,U,c)$.

To show the second statement, let $H\in \mk$ be a self adjoint element such that $W=e^{iH}$.
Then, the path $[0,1]\ni t\mapsto \hat W(t):= e^{itH}\in \caU_k$ is $C^{\infty}$, and the path of the triple
$[0,1]\ni t\mapsto \lmk k,\Ad \hat W(t)\lmk u \rmk, \lmk \unit\otimes \hat W(t)\rmk V\hat W(t)^* \rmk$ satisfies
 the conditions in Definition \ref{def:spath}.
To prove the third condition in Definition \ref{def:spath}, we use the first statement of this Lemma, which we have just proven.
\end{proof}
From Lemma \ref{lem:uch}, we see that in order to prove Theorem \ref{thm:gmain},
it suffices to consider the case $k^{(0)}=k^{(1)}$ and $u^{(0)}=u^{(1)}$.
Therefore, we prove the following Lemma.
\begin{lem}\label{lem:psp}
Let $(u, \bbC^{k})$
be finite dimensional projective unitary representation of $G$ with respect to the $2$-cocycle $c$.
Let $V_0,V_1:\bbC^k\to\bbC^n\otimes \bbC^k$ be isometries 
such that $(k,u, V_0),(k,u,V_1)\in\caSP(n,G,U,c)$.
Then we have 
\[
(k,u, V_0)\simeq_{\caSP}(k,u,V_1).
\] 
\end{lem}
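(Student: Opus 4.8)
The plan is to pass to a linear-algebra picture and reduce Lemma~\ref{lem:psp} to a path-connectedness statement for an equivariant version of the space of primitive maps with bounded Kraus rank, which is the equivariant counterpart of the result used in \cite{bo}, \cite{SW}. Throughout, $u$ is held fixed. First I would introduce the intertwiner space
\[
\caI:=\ltm V\in B(\cc^k,\cc^n\otimes\cc^k)\ :\ \lmk U_g\otimes u_g\rmk V=Vu_g \text{ for all } g\in G\rtm,
\]
a complex linear subspace, and note that $V\mapsto \vv_V$ (via (\ref{eq:Vv})) is a linear isomorphism of $\caI$ onto a linear subspace of $\mk^{\times n}$, and that a short computation gives $T_{\vv_V}(X)=V^*\lmk\unit_{\cc^n}\otimes X\rmk V$, so that $V$ is an isometry iff $T_{\vv_V}$ is unital. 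Hence Lemma~\ref{lem:psp} is equivalent to the assertion that
\[
\caP:=\ltm V\in\caI\ :\ V^*V=\unit,\ T_{\vv_V}\ \text{is primitive}\rtm
\]
is $C^{\infty}$-path-connected: a $C^{\infty}$ path $V:[0,1]\to\caP$ with $V(0)=V_0$, $V(1)=V_1$, together with the constant representation $u(t,\cdot)=u$, satisfies all the conditions of Definition~\ref{def:spath} by Lemma~\ref{lem:spgap}.

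Next I would describe the ambient manifold. Decomposing $u$ and $U\otimes u$ into isotypic components for the $2$-cocycle $c$ (Schur's lemma for projective unitary representations), one identifies $\caI$ with a direct sum of matrix "multiplicity" blocks and the set $\caI_{\rm iso}:=\ltm V\in\caI: V^*V=\unit\rtm$ of isometry intertwiners with a corresponding product of complex Stiefel manifolds; this product is nonempty (because $V_0$ exists), hence connected and $C^{\infty}$-path-connected. A useful stability fact here is that $Y^*Y$ commutes with $u(G)$ for every $Y\in\caI$ (this uses only the intertwining relation and unitarity of $U_g,u_g$, exactly as in the first lines of the proof of Lemma~\ref{lem:spgap}), so the retraction $R(Y):=Y\lmk Y^*Y\rmk^{-1/2}$ maps injective elements of $\caI$ into $\caI_{\rm iso}$.

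The remaining and decisive step is to promote a $C^{\infty}$ path in $\caI_{\rm iso}$ joining $V_0$ to $V_1$ to one lying in $\caP$. By the quantum Wielandt inequality \cite{Sanz:2010aa}, $T_{\vv_V}$ is primitive iff $T_{\vv_V}^{k^4}$ has Kraus rank $k^2$, i.e. iff its Choi matrix is invertible; thus $\caP$ is open in $\caI_{\rm iso}$ and its complement is contained in the real-analytic set $\ltm V\in\caI_{\rm iso}\ :\ \det C_{T_{\vv_V}^{k^4}}=0\rtm$. The strategy would be to fix a reference $V_*\in\caP$ and, for each $V\in\caP$, connect $V$ to $V_*$ within $\caP$ by an explicit "rotation" path of the form $\theta\mapsto R\lmk\cos\theta\,V+\sin\theta\,V_*\rmk$ (using the retraction above, after arranging, by a preliminary deformation inside $\caI_{\rm iso}$ and if necessary a conjugation of the kind provided by Lemma~\ref{lem:uch} by a unitary in $u(G)'$, that $\cos\theta\,V+\sin\theta\,V_*$ is injective for all $\theta$); along such a path $\det C_{T^{k^4}}$ is a trigonometric polynomial in $\theta$ that is nonzero at the endpoints, hence has only finitely many zeros, and these are then eliminated by a small perturbation of the path performed inside the linear space $\caI$. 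Concatenating the paths for $V=V_0$ and $V=V_1$ gives the claim.

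The main obstacle is precisely this last point: showing that the non-primitive locus inside the equivariant isometry manifold $\caI_{\rm iso}$ can be dodged, equivalently that $\caP$ is path-connected. This is the equivariant version of the connectedness of the space of primitive maps with bounded Kraus rank established in \cite{bo} and, more efficiently, in \cite{SW}; the extra difficulty compared with the non-symmetric situation is that \emph{every} deformation, including the perturbations used to avoid non-primitive configurations, must remain inside the linear subspace $\caI$ cut out by the intertwining relation (\ref{eq:uui}), rather than in the full space of isometries $\cc^k\to\cc^n\otimes\cc^k$. I expect the bulk of the work, following the method of \cite{SW}, to go into this equivariant bookkeeping, with the Schur decomposition of the second step used to keep it tractable.
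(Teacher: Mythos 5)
Your reduction---pass to the intertwiner space $\caI$, identify the isometric intertwiners via Schur's lemma with a product of Stiefel manifolds, and reduce the lemma to path-connectedness of the primitive locus $\caP$---is the right frame and matches the paper's first step (Lemma~\ref{lem:vst} is exactly your Schur decomposition, and your observation that $Y^*Y$ commutes with $u(G)$ so that the polar retraction preserves $\caI$ is correct and useful). But the step you yourself flag as the ``main obstacle'' is where the proposal stops short of a proof. Along the one-real-parameter rotation path $\theta\mapsto\cos\theta\,V+\sin\theta\,V_*$ you do get, by analyticity, that the primitivity witness vanishes at only finitely many $\theta$; but the path then passes \emph{through} those points, and the claim that they can be ``eliminated by a small perturbation inside $\caI$'' is exactly the assertion you are trying to prove. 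To make a perturbation argument work you would need an a priori codimension or transversality estimate on the non-primitive locus inside the real manifold of isometric intertwiners, and none is supplied.

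The paper sidesteps this with two ideas absent from your proposal. First, it replaces the real rotation by a one-\emph{complex}-parameter entire family: by Lemma~\ref{lem:vst}, $V_0$ and $V_1$ differ by a block-diagonal partial unitary $S=e^{iH}$ on the multiplicity spaces, and one sets $V(z):={W'}^*e^{izH}\lmk\bigoplus_\alpha\unit_{V_\alpha}\otimes\omega_{0,\alpha}\rmk W$, which lies in $\caI$ for every $z\in\cc$ and interpolates $V(0)=V_0$, $V(1)=V_1$. By the criterion of Lemma~\ref{lem:pg}, the non-primitive set $\caZ\subset\cc$ is the common zero set of entire functions $f_\mm(z)$, at least one of which is not identically zero (since $0,1\notin\caZ$), so $\caZ\cap\{|z|\le 2\}$ is finite and a $C^\infty$ path $\varpi:[0,1]\to\cc$ from $0$ to $1$ avoiding $\caZ$ exists trivially---dodging a finite subset of $\cc\cong\rr^2$ is what the one-dimensional $\theta$-path cannot do. Second, for non-real $z$ the operator $V(z)$ is not an isometry, and the paper restores it by renormalizing through the Perron--Frobenius data of the (already primitive) map $T_{\vv(\varpi(t))}$: $\hat v_\mu(t):=r^{-1/2}e^{-1/2}v_\mu(\varpi(t))e^{1/2}$, where $r$ is the spectral radius and $e$ the PF eigenvector. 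This makes $T_{\hat\vv(t)}$ unital, hence $\hat V(t)$ an isometry, and preserves the intertwining relation because $e$ commutes with every $u_g$ by (\ref{eq:utg}). Your polar retraction plays a formally similar role, but you apply it to a real path on which primitivity has not been secured; the paper's renormalization is applied only \emph{after} the complex detour has guaranteed primitivity. In short: correct setup, but the two mechanisms that actually dodge the bad set and then return to the isometry constraint---complex analytic continuation and equivariant Perron--Frobenius renormalization---are the missing content.
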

In order to prove this, we first investigate the structure of the isometry $V$.
We consider all the equivalence classes of finite dimensional irreducible unitary $c$-projective representations of $G$.
Let $\left\{(\pi_\alpha, V_\alpha)\right\}_{\alpha}$ be the set of representatives of them.
Then we obtain the irreducible decompositions of $u_g$ and $U_g\otimes u_g$ \cite{Kiri}.
Note that the proof of Schur's Lemma and the irreducible decomposition for usual representations
works for unitary projective representations.
For each irreducible $c$-projective representation $\pi_\alpha$, there exist
numbers $m_\alpha,m_\alpha'\in\nan\cup\{0\}$. There also exist
unitaries $W:\bbC^k\to \bigoplus_{\alpha:m_\alpha\neq 0} V_{\alpha}\otimes \bbC^{m_\alpha}$, $W':\bbC^n\otimes \bbC^k\to \bigoplus_{\alpha:m_\alpha'\neq 0} V_{\alpha}\otimes \bbC^{m_\alpha'}$
and we have
\begin{align}
&Wu_g W^*=\bigoplus_{\alpha:m_\alpha\neq 0}\pi_\alpha(g)\otimes \unit_{m_\alpha},\label{eq:iru}\\
&W'\lmk U_g\otimes u_g\rmk {W'}^* =\bigoplus_{\alpha:m_\alpha'\neq 0}\pi_\alpha(g)\otimes \unit_{m_\alpha'}.
\label{eq:irUu}
\end{align}
We have the following Lemma.
\begin{lem}\label{lem:vst}
Let $(k,u,V)\in \caSP(n,G,U,c)$. We consider the irreducible decompositions given in (\ref{eq:iru}) and
(\ref{eq:irUu}).
Then $m_\alpha=0$ if $m_{\alpha}'=0$.
Furthermore, if $m_{\alpha}\neq 0$,
there exists an isometry $\omega_\alpha:\bbC^{m_\alpha}\to\bbC^{m_{\alpha}'}$ and they satisfy 
\[
V= {W'}^*\lmk\bigoplus_{\alpha:m_\alpha\neq 0}\unit_{V_\alpha}\otimes \omega_\alpha\rmk W.
\]
\end{lem}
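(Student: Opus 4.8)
The plan is to exploit the fact that the defining relation (\ref{eq:uui}), $\lmk U_g\otimes u_g\rmk V=Vu_g$, says exactly that $V$ intertwines the $c$-projective representation $u$ on $\bbC^k$ with the $c$-projective representation $U\otimes u$ on $\bbC^n\otimes\bbC^k$. First I would transport this to the decomposed pictures (\ref{eq:iru})--(\ref{eq:irUu}) by setting $\tilde V:=W'VW^*$. Then $\tilde V$ is again an isometry, since $\tilde V^*\tilde V=WV^*{W'}^*W'VW^*=WV^*VW^*=\unit$, and the intertwining relation becomes
\begin{align*}
\lmk \bigoplus_{\alpha:m_\alpha'\neq 0}\pi_\alpha(g)\otimes\unit_{m_\alpha'}\rmk\tilde V
=\tilde V\lmk \bigoplus_{\alpha:m_\alpha\neq 0}\pi_\alpha(g)\otimes\unit_{m_\alpha}\rmk,\qquad g\in G.
\end{align*}

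Next I would decompose $\tilde V$ into blocks $\tilde V_{\beta\alpha}:V_\alpha\otimes\bbC^{m_\alpha}\to V_\beta\otimes\bbC^{m_\beta'}$ along the two direct sums and invoke Schur's Lemma, which (as noted after (\ref{eq:irUu})) holds verbatim for $c$-projective representations with a fixed cocycle. Expanding $\tilde V_{\beta\alpha}$ against matrix units of the multiplicity spaces $\bbC^{m_\alpha}$ and $\bbC^{m_\beta'}$, each matrix coefficient is a map $V_\alpha\to V_\beta$ intertwining $\pi_\alpha$ and $\pi_\beta$; since the $\pi_\alpha$ are pairwise inequivalent irreducibles, these coefficients vanish for $\alpha\neq\beta$ and are scalar multiples of $\unit_{V_\alpha}$ for $\alpha=\beta$. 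Hence there are linear maps $\omega_\alpha\in B(\bbC^{m_\alpha},\bbC^{m_\alpha'})$ with $\tilde V=\bigoplus_{\alpha:m_\alpha\neq 0}\unit_{V_\alpha}\otimes\omega_\alpha$, with the convention $\omega_\alpha=0$ when $m_\alpha'=0$.

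Finally I would read off both assertions from $\tilde V^*\tilde V=\unit$. This forces $\bigoplus_{\alpha:m_\alpha\neq 0}\unit_{V_\alpha}\otimes\omega_\alpha^*\omega_\alpha=\unit$, hence $\omega_\alpha^*\omega_\alpha=\unit_{m_\alpha}$ for every $\alpha$ with $m_\alpha\neq 0$. If $m_\alpha'=0$ while $m_\alpha\neq 0$ this is impossible (then $\omega_\alpha=0$), so $m_\alpha=0$ whenever $m_\alpha'=0$; for the remaining $\alpha$ the relation $\omega_\alpha^*\omega_\alpha=\unit_{m_\alpha}$ says precisely that $\omega_\alpha$ is an isometry $\bbC^{m_\alpha}\to\bbC^{m_\alpha'}$. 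Undoing the conjugation gives $V={W'}^*\tilde V W={W'}^*\lmk\bigoplus_{\alpha:m_\alpha\neq 0}\unit_{V_\alpha}\otimes\omega_\alpha\rmk W$, which is the desired formula. I do not anticipate a genuine obstacle here: the argument is the standard observation that an intertwiner between isotypic decompositions is block-diagonal and equal to ``identity tensor a multiplicity matrix'' on each surviving block, and the only point needing (already available) care is that Schur's Lemma and this block structure persist for projective representations, so that $V$ cannot couple inequivalent isotypic components.
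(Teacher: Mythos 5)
Your proposal is correct and is essentially the paper's own argument: conjugate $V$ by $W$ and $W'$ to pass to the isotypic pictures, apply Schur's Lemma blockwise (the paper phrases this via explicit projections $P_{\alpha,i}$, $P'_{\beta,j}$ and matrix coefficients, but that is the same computation as your matrix-unit expansion), and then read off from $\tilde V^*\tilde V=\unit$ that each $\omega_\alpha$ is an isometry and that $m_\alpha'\neq 0$ whenever $m_\alpha\neq 0$.
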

\begin{proof}
For $m\in\nan$, let $\{\chi_i^{(m)}\}_{i=1}^m$ be the standard basis of $\bbC^{m}$.
Each element $\xi$ in $\bigoplus_{\alpha:m_\alpha\neq 0} V_{\alpha}\otimes \bbC^{m_\alpha}$
can be decomposed as
\[
\xi=\bigoplus_{\alpha:m_\alpha\neq 0} \sum_{i=1}^{m_\alpha}\xi_{\alpha,i}\otimes \chi_{i}^{(m_\alpha)},
\]
with some $\xi_{\alpha,i}\in V_\alpha$.
For $\alpha$ with $m_\alpha\neq 0$ and $i=1,\ldots, m_{\alpha}$, let us consider the subspace of $\bigoplus_{\beta:m_\beta\neq 0} V_{\beta}\otimes \bbC^{m_\beta}$
which consists of $\xi$ such that $\xi_{\beta,j}=0$ if $(\beta,j)\neq(\alpha,i)$.
We denote the orthogonal projection onto this subspace by $P_{\alpha,i}$.
Similarly, for $\alpha$ with $m_\alpha'\neq 0$ and $i=1,\ldots, m_{\alpha}'$ we define an orthogonal projection 
$P_{\alpha,i}'$
on $\bigoplus_{\alpha:m_\alpha'\neq 0} V_{\alpha}\otimes \bbC^{m_\alpha'}$.
Substituting the decompositions (\ref{eq:iru}) and
(\ref{eq:irUu}) to (\ref{eq:uui}), 
we obtain
\begin{align}
\lmk \bigoplus_{\alpha:m_\alpha'\neq 0}\pi_\alpha(g)\otimes \unit_{m_\alpha'}\rmk W'VW^*
=W'VW^*\lmk \bigoplus_{\alpha:m_\alpha\neq 0}\pi_\alpha(g)\otimes \unit_{m_\alpha}\rmk,\quad g\in G.
\end{align}
Multiplying $P_{\alpha,i}$ from the left and $P_{\beta,j}$ from the right of this equation,
we obtain
\[
\pi_\alpha(g)P_{\alpha,i}W'VW^*P_{\beta,j}
=P_{\alpha,i}W'VW^*P_{\beta,j}\pi_\beta(g),\quad g\in G,
\] 
for all $\alpha,\beta$ such that $m_{\alpha}'\neq 0$ and $m_\beta\neq 0$.
Regarding $P_{\alpha,i}W'VW^*P_{\beta,j}$ a linear map from $V_\beta$ to $V_\alpha$, we apply Schur's Lemma.
We then obtain $P_{\alpha,i}W'VW^*P_{\beta,j}=0$ if $\alpha\neq\beta$.
If $m_\alpha,m_\alpha'\neq 0$, there exists a scalar $C_{\alpha,i,j}\in\bbC$ such that
$P_{\alpha,i}W'VW^*P_{\alpha,j}=C_{\alpha,i,j}\unit_{V_\alpha}$.
Set $\omega_\alpha:=\sum_{\stackrel{1\le i\le m_\alpha}{1\le j\le m_{\alpha'}}}C_{\alpha,i,j}\ket{\chi_i^{(m_\alpha')}}\bra{\chi_j^{(m_\alpha)}} \in B(\bbC^{m_{\alpha}}, \bbC^{m_\alpha})$ in this case.
We then obtain
\begin{align*}
W'VW^*=\sum_{\stackrel{\alpha,\beta:m_\alpha\neq 0,m_\beta'\neq 0}{{1\le i\le m_\alpha, 1\le j\le m_{\beta}'}}}
P_{\alpha,i}W'VW^*P_{\beta,j}
=\sum_{\stackrel{\alpha:m_\alpha\neq 0,m_\alpha'\neq 0}{{1\le i\le m_\alpha, 1\le j\le m_{\alpha}'}}}P_{\alpha,i}W'VW^*P_{\alpha,j}
=\bigoplus_{{\alpha:m_\alpha\neq 0,m_\alpha'\neq 0}}\unit_{V_\alpha}\otimes \omega_\alpha.
\end{align*}
As $V$ is an isometry, we obtain
\begin{align*}
\bigoplus_{\alpha:m_\alpha\neq 0}\unit_{V_{\alpha}}\otimes \bbC^{m_\alpha}=\unit=\lmk W'VW^*\rmk^*W'VW^*=\bigoplus_{{\alpha:m_\alpha\neq 0,m_\alpha'\neq 0}}\unit_{V_\alpha}\otimes \omega_\alpha^*\omega_\alpha.
\end{align*}
If there exits an $\alpha$ such that $m_\alpha\neq0$ and $m_\alpha'= 0$, this equality can not hold. Therefore, we have
$m_\alpha'\neq0$ if $m_\alpha\neq 0$.
Furthermore, $\omega_\alpha:\bbC^{m_\alpha}\to\bbC^{m_{\alpha}'}$ is an isometry. 
\end{proof}
\begin{proofof}[Lemma \ref{lem:psp}]
For $U$ and $u$, we consider the irreducible decompositions (\ref{eq:iru}), (\ref{eq:irUu}).
Applying Lemma \ref{lem:vst} to $V_i$, $i=0,1$, we obtain isometries  
$\omega_{i,\alpha}:\bbC^{m_\alpha}\to \bbC^{m_\alpha'}$,  such that
\[
V_i= {W'}^*\lmk\bigoplus_{\alpha:m_\alpha\neq 0}\unit_{V_\alpha}\otimes \omega_{i,\alpha}\rmk W.
\]
For each $\alpha$ with $m_\alpha\neq 0$,
there exists a unitary $S_\alpha\in\caU_{m_\alpha'}$ such that $S_\alpha\omega_{0,\alpha}=\omega_{1,\alpha}$.
This is because $\omega_{0,\alpha}$, $\omega_{1,\alpha}$ are isometries.
Let $H_\alpha\in\Mat_{m_{\alpha}'}$ be a self-adjoint matrix such that $S_\alpha=e^{iH_\alpha}$.
We set $H_\alpha=0$ if $m_\alpha'\neq0$ and $m_\alpha=0$. We define $H:=\bigoplus_{\alpha:m_\alpha'\neq 0}\unit_{V_\alpha}\otimes H_\alpha$.

We would like to connect $(k,u, V_0)$ and $(k,u,V_1)$ by a smooth path in $\caSP(n,G,U,c)$, by connecting
$\unit$ and $e^{iH}$ suitably.
In order for that, we recall the  the necessary and sufficient condition for the primitivity, introduced in
\cite{SW}. See Appendix \ref{sec:prim}. We use the notations in Appendix \ref{sec:prim}.
We define a $B(\bbC^k,\bbC^n\otimes\bbC^k)$-valued entire analytic function $V(z)$ by
\begin{align}\label{eq:vz}
V(z):={W'}^*e^{izH}\lmk\bigoplus_{\alpha:m_\alpha\neq 0}\unit_{V_\alpha}\otimes \omega_{0,\alpha}\rmk W,\quad
z\in\bbC.
\end{align}
Note that $V(0)=V_0$ and $V(1)=V_1$.From the definition (\ref{eq:vz}), and the decompositions (\ref{eq:iru}), (\ref{eq:irUu}), we obtain
\begin{align}\label{eq:vg}
\lmk {U_g}\otimes u_g\rmk V(z)=V(z) u_g,\quad z\in\bbC, \quad g\in G.
\end{align}
As in (\ref{eq:Vv}), we define $v_{\mu}\lmk  z\rmk$
\begin{align*}
V(z) x=\sum_{\mu=1}^{n}\psi_{\mu}\otimes \lmk v_{\mu}\lmk \bar z\rmk\rmk^*  x,\quad
x\in\bbC^k,\quad z\in\bbC.
\end{align*}
Note that $\bbC\ni z\mapsto v_{\mu}\lmk z\rmk$ is entire analytic, for each $\mu=1,\ldots,n$.
We write $\vv(z):=(v_1(z),\ldots, v_n(z))$.
By the same calculation as in (\ref{eq:aug}), we obtain
\begin{align}\label{eq:utg}
\Ad u_g^*\circ T_{\vv(z)}\circ \Ad u_g=T_{\vv(z)},\quad z\in\bbC, \quad g\in G.
\end{align}
For each $\mm:=(\mu^{(k^4)}_i)_{i=1}^{k^2}$, where $\mu^{(k^4)}_i\in\left\{1,\ldots,n\right\}^{\times k^4}$, $i=1,\ldots,k^2$, we define an entire analytic function
\[
f_\mm(z):=\braket{\zeta}{\bigotimes_{i=1}^{k^2}\lmk\widehat { v_{\mu_i^{(d^4)}}(z)}\otimes \unit\rmk \Omega}.
\]
We set
\[
\caZ:=\bigcap_{\mm}\left\{z\in\bbC\mid 
f_\mm(z)=0
\right\}.
\]
Here the intersection is taken over all $\mm:=(\mu^{(k^4)}_i)_{i=1}^{k^2}$, where $\mu^{(k^4)}_i\in\left\{1,\ldots,n\right\}^{\times k^4}$, $i=1,\ldots,k^2$.
From Lemma \ref{lem:pg}, $T_{\vv(z)}$ is primitive if and only if there exists $\mm$ such that $f_\mm(z)\neq0$.
Therefore, $T_{\vv(z)}$ is primitive if and only if $z\notin\caZ$. In particular, we have
$0,1\notin \caZ$.
This means at least one of $f_\mm$ is not identically zero.
As each $f_\mm(z)$ is entire analytic and at least one of them is not identically zero, the intersection of $\caZ$ and a ball $\{z\in \cc\mid |z|\le 2\}$
is a finite set.

Let $\tilde \varpi:[0,1]\to \bbC$ be a path in $\bbC$ given by $\tilde \varpi(t)=t$, $t\in[0,1]$.
As $\caZ\cap\{z\in \cc\mid |z|\le 2\}$ is a finite set, we can deform this $\tilde \varpi$ and obtain a $C^\infty$-path
$\varpi:[0,1]\to \bbC$ with $\varpi(0)=0$, $\varpi(1)=1$ such that $\varpi(t)\notin \caZ$ for all $t\in[0,1]$.
In particular, $T_{\vv(\varpi(t))}$ is primitive for $t\in[0,1]$.
From this primitivity, the spectral radius $r_{T_{\vv(\varpi(t))}}$ of
 $T_{\vv(\varpi(t))}$ is strictly positive, 
and it is a non-degenerate eigenvalue of $T_{\vv(\varpi(t))}$.
Let $P_{\{r_{T_{\vv(\varpi(t))}}\}}^{T_{\vv(\varpi(t))}}$ be the spectral projection 
of $T_{\vv(\varpi(t))}$ onto $\{r_{T_{\vv(\varpi(t))}}\}$. 
Then $e_{\vv(\varpi(t))}:=P_{\{r_{T_{\vv(\varpi(t))}}\}}^{T_{\vv(\varpi(t))}}\lmk \unit_k\rmk$ is a strictly positive element in
$\mk$
and there exists a faithful state $\varphi_{\vv(\varpi(t))}$ on $\mk$ such that
$P_{\{r_{T_{\vv(\varpi(t))}}\}}^{T_{\vv(\varpi(t))}}(X)= \varphi_{\vv(\varpi(t))}\lmk X\rmk e_{\vv(\varpi(t))}$, for
$X\in \mk$.
By (\ref{eq:utg}), $e_{\vv(\varpi(t))}$ and $u_g$ commute for all $t\in[0,1]$ and $g\in G$.
Note that $r_{T_{\vv(\varpi(0))}}=r_{T_{\vv(\varpi(1))}}=1$ and
$ e_{\vv(\varpi(0))}= e_{\vv(\varpi(1))}=1$.
Furthermore, the maps $[0,1]\ni t\mapsto r_{T_{\vv(\varpi(t))}}, e_{\vv(\varpi(t))}, \varphi_{\vv(\varpi(t))}$
are $C^\infty$.
Hence, setting 
\[
\hat v_\mu(t):= r_{T_{\vv(\varpi(t))}}^{-\frac 12}  e_{\vv(\varpi(t))}^{-\frac 12}
v_\mu(\varpi(t)) e_{\vv(\varpi(t))}^{\frac 12},\quad t\in[0,1],\quad \mu=1,\ldots,n,
\]
we obtain $C^\infty$ maps in $\mk$.
We set $\hat \vv(t):=(\hat v_1(t),\ldots, \hat v_n(t))$, $t\in[0,1]$.
Note that $\hat\vv(0)=\vv(\varpi(0))=\vv(0)$ and $\hat\vv(1)=\vv(\varpi(1))=\vv(1)$.
By this definition, $T_{\hat \vv(t)}$ is a unital completely positive map.
As $T_{\vv(\varpi(t))}$ is primitive, this $T_{\hat \vv(t)}$  is also primitive.

We define a $C^\infty$-path $\hat V: [0,1]\to B(\bbC^k,\bbC^n\otimes \bbC^k)$,
by
\[
\hat V(t)x:=\sum_{\mu=1}^n \psi_\mu\otimes \hat v_\mu(t)^*x
,\quad x\in \bbC^k,\quad
t\in[0,1].
\]
Note that $\hat V(0)=V(0)=V_0$, $\hat V(1)=V(1)=V_1$.
In order to prove Lemma \ref{lem:psp}, it suffices to show 
$(k,u,\hat V(t))\in \caSP(n,G,U,c)$, for all $t\in[0,1]$.
Since $T_{\hat \vv(t)}$ is unital, $\hat V(t)$ is an isometry.
As we already know that $T_{\hat \vv(t)}$
is primitive, what we have to check is (i)  of Definition \ref{def:sp}.
This can be checked as follows.
For any $x\in \bbC^k$,
\begin{align*}
&\lmk {U_g}\otimes u_g\rmk \hat V(t)x
=\lmk {U_g}\otimes u_g\rmk \lmk \sum_{\mu=1}^n \psi_\mu\otimes \hat v_\mu(t)^*x\rmk
= r_{T_{\vv(\varpi(t))}}^{-\frac 12}  \lmk
\sum_{\mu=1}^n  {U_g}\psi_\mu\otimes  u_g e_{\vv(\varpi(t))}^{\frac 12}
\lmk v_\mu(\varpi(t))\rmk^* e_{\vv(\varpi(t))}^{-\frac 12}
x\rmk
\\
&=r_{T_{\vv(\varpi(t))}}^{-\frac 12}  \lmk
\sum_{\mu=1}^n  {U_g}\psi_\mu\otimes  e_{\vv(\varpi(t))}^{\frac 12}u_g
\lmk v_\mu(\varpi(t))\rmk^* e_{\vv(\varpi(t))}^{-\frac 12}
x\rmk\\
&=r_{T_{\vv(\varpi(t))}}^{-\frac 12}  
\lmk \unit_n\otimes e_{\vv(\varpi(t))}^{\frac 12}\rmk
\lmk {U_g}\otimes u_g\rmk
\lmk
\sum_{\mu=1}^n  \psi_\mu\otimes  
\lmk v_\mu(\varpi(t))\rmk^* e_{\vv(\varpi(t))}^{-\frac 12}
x\rmk\\
&=r_{T_{\vv(\varpi(t))}}^{-\frac 12}  
\lmk \unit_n\otimes e_{\vv(\varpi(t))}^{\frac 12}\rmk
\lmk {U_g}\otimes u_g\rmk
V\lmk \overline{\varpi(t)}\rmk e_{\vv(\varpi(t))}^{-\frac 12}
x
=r_{T_{\vv(\varpi(t))}}^{-\frac 12}  
\lmk \unit_n\otimes e_{\vv(\varpi(t))}^{\frac 12}\rmk
V\lmk \overline{\varpi(t)}\rmk u_g e_{\vv(\varpi(t))}^{-\frac 12}
x
\\
&=r_{T_{\vv(\varpi(t))}}^{-\frac 12}  
\lmk \unit_n\otimes e_{\vv(\varpi(t))}^{\frac 12}\rmk
V\lmk \overline{\varpi(t)}\rmk e_{\vv(\varpi(t))}^{-\frac 12} u_g
x
=r_{T_{\vv(\varpi(t))}}^{-\frac 12}  
\lmk \unit_n\otimes e_{\vv(\varpi(t))}^{\frac 12}\rmk
\lmk
\sum_{\mu=1}^n  \psi_\mu\otimes  \lmk v_\mu(\varpi(t))\rmk^*e_{\vv(\varpi(t))}^{-\frac 12} u_g
x\rmk\\
&=
\lmk
\sum_{\mu=1}^n  \psi_\mu\otimes  r_{T_{\vv(\varpi(t))}}^{-\frac 12} e_{\vv(\varpi(t))}^{\frac 12}\lmk v_\mu(\varpi(t))\rmk^*e_{\vv(\varpi(t))}^{-\frac 12} u_g
x\rmk
=
\sum_{\mu=1}^n \psi_\mu\otimes \hat v_\mu(t)^*u_gx
=\hat V(t) u_gx,\quad g\in G,\quad t\in[0,1].
\end{align*}
In the third and the seventh equality, we used the commutativity of $ e_{\vv(\varpi(t))}$ and $u_g$.
In the sixth equality, we used (\ref{eq:vg}).
This completes the proof.
\end{proofof}
\begin{proofof}[Theorem \ref{thm:gmain}]
As $(u^{(0)}, \bbC^{k^{(0)}})$ and $(u^{(1)},\bbC^{k^{(1)}})$ are unitary equivalent, we have $k:=k^{(0)}=k^{(1)}$ and
there exists a unitary matrix $W\in\caU_k$ such that 
$W u^{(0)}_g=u^{(1)}_gW$, $g\in G$.
By Lemma \ref{lem:uch}, we have
$(k^{(0)}, u^{(0)}, V^{(0)})\simeq_{\caSP}(k^{(0)},\Ad W(u^{(0)}) , \lmk \unit\otimes W\rmk V^{(0)}W^*)
=(k,u^{(1)}, \lmk \unit\otimes W\rmk V^{(0)}W^*)$.
Furthermore, by Lemma \ref{lem:psp}, we obtain $(k,u^{(1)}, \lmk \unit\otimes W\rmk V^{(0)}W^*)\simeq_{\caSP}
(k,u^{(1)},V_1)=(k^{(1)}, u^{(1)},V_1)$.
Applying Lemma \ref{lem:spgap}, the same argument as \cite{bo} implies Theorem \ref{thm:gmain}.
\end{proofof}
\appendix

\section{Notations}\label{sec:nota}
For  $k\in\nan$, the set of all $k\times k$ matrices
over $\cc$ is denoted by
$\mk$.Furthermore, we denote the set of unitary elements of $\mk$
by $\caU_k$. For $A\in\mk$, we denote the map $\mk\ni X\mapsto AXA^*\in\mk $
by $\Ad A$.
For a linear map $\Gamma$, $\Ker\Gamma$, and $\Ran\Gamma$ denote the kernel and the range of $\Gamma$ respectively.
For a finite dimensional Hilbert space, braket $\braket{}{}$ denotes the
inner product of
the space under consideration.
We denote the set of all bounded linear maps from Hilbert space
$\mathcal H$ to $\mathcal K$ is denoted by
$B({\mathcal H},{\mathcal K})$.
For  a subspace $\mathfrak K$, ${\mathfrak{K}}^{\perp}$ means the orthogonal complement of $\mathcal H$ .

\section{Primitivity}\label{sec:prim}
In this section we recall the necessary and sufficient condition for $T_\vv$, (given for an $n$-tuple of $k\times k$ matrices $\vv\in\mk^{\times n}$), to be primitive. This condition was introduced in \cite{SW}. See \cite{SW} for the detail.Let $\{\chi_i^{(k)}\}_{i=1}^k$ (resp. $\{\chi_i^{(k^2)}\}_{i=1}^{k^2}$)  be a standard basis of $\bbC^k$ (resp. $\bbC^{k^2}$).
We set $\Omega:=\sum_{i=1}^k\chi_i^{(k)}\otimes \chi_i^{(k)}$.
We also set $\zeta:=\sum_{\sigma\in S_{k^2}}\sgn\sigma \chi_{\sigma(1)}^{(k^2)}\otimes\chi_{\sigma(2)}^{(k^2)}\otimes \cdots
\otimes \chi_{\sigma(k^2)}^{(k^2)}$.
Here, $S_{k^2}$ is the symmetric group of degree $k^2$ and $\sgn\sigma$ is the signature of $\sigma\in S_{k^2}$. 
\begin{lem}[\cite{SW}]\label{lem:pg}
The completely positive map $T_\vv$ is primitive if and only if
there exists $\mm:=(\mu^{(k^4)}_i)_{i=1}^{k^2}$, where $\mu^{(k^4)}_i\in\left\{1,\ldots,n\right\}^{\times k^4}$, $i=1,\ldots,k^2$ such that
\[
\braket{\zeta}{\bigotimes_{i=1}^{k^2}\lmk v_{\mu_i^{(d^4)}}\otimes \unit\rmk \Omega}\neq 0.
\]
\end{lem}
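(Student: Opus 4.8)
The plan is to chain three equivalences: (1) primitivity of $T_\vv$ $\iff$ the words of length $k^4$ span $\mk$; (2) spanning $\mk$ $\iff$ some $k^2$ of these words are linearly independent; (3) linear independence of $k^2$ words $\iff$ the displayed pairing with $\zeta$ is nonzero for the corresponding $\mm$. Steps (2) and (3) are pure linear algebra; the one substantial input sits in step (1).

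For step (1), write $\caS_m:=\spn\{\widehat{v_{\mu^{(m)}}}\mid\mu^{(m)}\in\{1,\dots,n\}^{\times m}\}\subset\mk$. If $\caS_{k^4}=\mk$, then also $\caS_{2k^4}=\mk$ (products of matrices span $\mk$), so for $0\neq X\geq 0$ and $0\neq\eta\in\bbC^k$ the vectors $\{\widehat{v_{\rho^{(2k^4)}}}^{*}\eta\}_\rho$ span $\bbC^k$ and hence
\[
\braket{\eta}{T_\vv^{2k^4}(X)\eta}=\sum_{\rho^{(2k^4)}}\bigl\|X^{1/2}\widehat{v_{\rho^{(2k^4)}}}^{*}\eta\bigr\|^2>0 ;
\]
thus $T_\vv^{2k^4}$ is positivity improving and $T_\vv$ is primitive. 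Conversely, if $T_\vv$ is primitive it is in particular irreducible, so the $v_\mu$ have no common proper invariant subspace; applied to the invariant subspace $\sum_\mu\Ran v_\mu$ this forces $\sum_\mu\Ran v_\mu=\bbC^k$, which makes $m\mapsto\dim\caS_m$ non-decreasing once it first reaches $k^2$. By the quantum Wielandt inequality (Lemma 3.1 of \cite{bo}, \cite{Sanz:2010aa}) that index is at most $k^4$, so $\caS_{k^4}=\mk$.

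For steps (2)--(3), note $\dim_\bbC\mk=k^2$, so $\caS_{k^4}=\mk$ iff one can pick $\mm=(\mu_i^{(k^4)})_{i=1}^{k^2}$ with $\widehat{v_{\mu_1^{(k^4)}}},\dots,\widehat{v_{\mu_{k^2}^{(k^4)}}}$ a basis of $\mk$. Now use the dictionary of Appendix \ref{sec:prim}: the map $\mk\ni w\mapsto(w\otimes\unit)\Omega=\sum_i w\chi_i^{(k)}\otimes\chi_i^{(k)}$ is a linear isomorphism $\mk\to\bbC^k\otimes\bbC^k\cong\bbC^{k^2}$ (it records the matrix entries of $w$), and under this identification $\zeta$ is the totally antisymmetric tensor, so for any $\eta_1,\dots,\eta_{k^2}\in\bbC^{k^2}$,
\[
\braket{\zeta}{\eta_1\otimes\cdots\otimes\eta_{k^2}}=\det\bigl(\eta_1\ \big|\ \cdots\ \big|\ \eta_{k^2}\bigr),
\]
the determinant of the $k^2\times k^2$ matrix with columns $\eta_i$ (here one uses that $\sgn\sigma$ is real). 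Taking $\eta_i=(\widehat{v_{\mu_i^{(k^4)}}}\otimes\unit)\Omega$, the quantity $\braket{\zeta}{\bigotimes_{i=1}^{k^2}(\widehat{v_{\mu_i^{(k^4)}}}\otimes\unit)\Omega}$ is exactly this determinant for the vectorized words, hence nonzero precisely when $\widehat{v_{\mu_1^{(k^4)}}},\dots,\widehat{v_{\mu_{k^2}^{(k^4)}}}$ are linearly independent.

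Concatenating the three equivalences yields the lemma. The only non-elementary ingredient is the quantum Wielandt bound ``$k^4$'' used in step (1); this is the real obstacle to a self-contained argument, and it is exactly what is imported from \cite{Sanz:2010aa}/\cite{bo}. In the body we simply cite \cite{SW}, where this characterization is established.
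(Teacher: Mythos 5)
Your reconstruction is sound; note that the paper itself gives no proof of this lemma --- it is stated as a direct citation of \cite{SW}, so there is no in-house argument to compare against. Your three-step chain is the one underlying the \cite{SW} characterization: the quantum Wielandt inequality (\cite{Sanz:2010aa}, Lemma~3.1 of \cite{bo}) reduces primitivity of $T_\vv$ to the spanning condition $\spn\{\widehat{v_{\mu^{(k^4)}}}\}=\mk$; since $\dim\mk=k^2$, spanning is equivalent to some $k^2$ of these length-$k^4$ words being linearly independent; and the vectorization $w\mapsto(w\otimes\unit)\Omega$ (a Hilbert--Schmidt isometry onto $\bbC^k\otimes\bbC^k\cong\bbC^{k^2}$) together with the totally antisymmetric tensor $\zeta$ converts that independence into non-vanishing of a $k^2\times k^2$ determinant, which is precisely the displayed pairing. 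Your forward implication via positivity improvement of $T_\vv^{2k^4}$ is fine (in fact length $k^4$ already suffices), and the stability observation --- that $\sum_\mu\Ran v_\mu=\bbC^k$, forced by irreducibility, propagates $\spn\{\widehat{v_{\mu^{(m)}}}\}=\mk$ to all larger $m$ --- correctly justifies evaluating at length exactly $k^4$ rather than at the first length where spanning occurs. I see no gap; the only non-elementary input is, as you rightly isolate, the Wielandt bound.
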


\end{document}